\newcommand{\triple}{{\vert\kern-0.25ex\vert\kern-0.25ex\vert}}
\theoremstyle{plain}
\newtheorem{definition}{Definition}[section]
\newtheorem{theorem}[definition]{Theorem}
\newtheorem{lemma}[definition]{Lemma}
\newtheorem{assumption}[definition]{Assumption}
\theoremstyle{definition}
\begin{document}
\title{\bf Delay stochastic interest rate model with jump and strong convergence in Monte Carlo simulations}
\author
{{\bf Emmanuel Coffie \footnote{Email: emmanuel.coffie@strath.ac.uk}}
\\[0.2cm]
Department of Mathematics and Statistics,
 \\[0.2cm]
University of Strathclyde,  Glasgow G1 1XH, U.K.}
\date{}
\maketitle
\begin{abstract}
 In this paper, we study analytical properties of the solutions to the generalised delay Ait-Sahalia-type interest rate model with Poisson-driven jump. Since this model does not have explicit solution, we employ several new truncated Euler-Maruyama (EM) techniques to investigate finite time strong convergence theory of the numerical solutions under the local Lipschitz condition plus the Khasminskii-type condition. We justify the strong convergence result for Monte Carlo calibration and valuation of some debt and derivative instruments.

 \medskip \noindent
{\small\bf Key words}: Stochastic interest rate model, delay volatility, Poisson jump, truncated EM scheme, strong convergence, Monte Carlo scheme.
\end{abstract}
\section{Introduction}
\noindent Despite of the popularity of several asset price stochastic models such as Black-Scholes (1973) \cite{blackshole}, Merton (1973) \cite{merton}, Vasicek (1977) \cite{vasicek}, Dothan (1978) \cite{dothan}, Brennan and Schwartz (1980) \cite{brennan}, Cox, Ingersoll and Ross (CIR) (1985) \cite{cox3} and Lewis (2000) \cite{lewis}, they may not be well-specified adequately to fully explain certain types of empirical phenomena in most financial markets. For instance, volatility 'skews' and 'smiles', and tail distribution of asset prices which have been observed empirically from various sources of financial data, may not be captured by these models (e.g., see \cite{mao3, kou,espen}).
\par
In recent times, several interesting research works have been directed towards adequate explanation of dynamical behaviours of financial variables against unexpected occurrences of these empirical phenomena. For instance, contrary to efficient market hypothesis, the delayed GBM \cite{mao3}, CIR \cite{wu3} and CEV \cite{cev} models have been introduced as extensions of \cite{blackshole}, \cite{cox3} and \cite{lewis} to incorporate volatility 'skews' and 'smiles' based on non-Markovian property to explain asset price dynamics. Similarly, a variety of jump diffusion models have also been proposed to explain jump behaviour or tails of distribution of asset prices. For references, see, for example, Merton (1976) \cite{jmerton}, Lin and Yeh (1999) \cite{linyeh} , Kou (2002) \cite{kou} and Wu et al. (2008) \cite{wu2}.
\par
Ait-Sahalia model proposed in \cite{aitsahalia} serves extensively as an indispensable tool for capturing dynamics of term structure of interest rates. This model is driven by a highly nonlinear stochastic differential equation (SDE)
\begin{equation}\label{eq:1}
 dx(t)=(\alpha_{-1}x(t)^{-1}-\alpha_{0}+\alpha_{1}x(t)-\alpha_{2}x(t)^{2})dt+\sigma x(t)^{\theta}dB(t),
\end{equation}
$x(0)=x_0$, for any $t>0$, where $\alpha_{-1},\alpha_{0},\alpha_{1}, \alpha_{2}$ are positive constants and $\theta >1$. Besides interest rates, it has also been considerably used to explain dynamics of asset price, volatility and other financial instruments. There have been several rich literature concerning with this model. For instance, Cheng (2009) in \cite{cheng} studied this model and established weak convergence of EM scheme. Szpruch et al. (2011) in \cite{Szpruch} generalised this model and established strong convergence of implicit EM method as well as preservation of positive approximate solutions of this method when a monotone condition is fulfilled. Dung (2016) in \cite{dung} derived explicit estimates for tail probabilities of solutions to the generalised form of this model. Deng et al. in \cite{Deng1} studied analytical properties of the generalised form of this model with Poisson-driven jump and revealed weak convergence of EM method.
\par
While the SDE \eqref{eq:1} enjoys significant patronage of both market participants and practitioners, it may also not be well specified to adequately explain interest rate dynamics in response to joint effects of extreme volatility and jump behaviour or information flows as observed empirically from most financial markets. This motivates the need to modify this model to help explain adequately these empirical phenomena more collectively. In modelling context, it is worthwhile to extend SDE \eqref{eq:1} to incorporate delayed volatility function and Poisson-driven jump described by
\begin{equation}\label{eq:2}
  dx(t)=(\alpha_{-1}x(t^-)^{-1}-\alpha_{0}+\alpha_{1}x(t^-)-\alpha_{2}x(t^-)^{\rho})dt+\varphi(x((t-\tau)^-))x(t^-)^{\theta}dB(t)+\alpha_3x(t^-)dN(t)
\end{equation}
on $t\ge-\tau$ with initial data $x(t)=\xi(t)$ for $t\in[-\tau,0]$. Here $x(t^-)=\lim_{s\rightarrow t^-}x(s)$, $x((t-\tau)^-)$ denotes delay in $x(t^-)$, $\varphi(\cdot)$ depends on $x((t-\tau)^-)$ with $\tau>0$. Moreover, $\alpha_{-1},\alpha_{0},\alpha_{1}, \alpha_{2}, \alpha_{3}>0$, $\rho, \theta >1$, $B(t)$ is a scalar Brownian motion and $N(t)$ is a scalar Poisson process independent of $B(t)$ with a scalar compensated Poisson process defined by $\widetilde{N}(t)=N(t)-\lambda t$, where $\lambda$ is the jump intensity.
\par
The \textup{SDDE}  \eqref{eq:2} is characterised by two distinguished features. The delayed volatility function may explain volatility 'smiles' and 'skews' which are common in most financial markets. On the other hand, the Poisson-driven jump may account for responses of interest rates to discontinuous random effects generated in connection with unexpected catastrophic news or lack of information.
\par
It is worth observing that the \textup{SDDE}  \eqref{eq:2} is not analytically tractable and so there is a need to employ an efficient numerical scheme to estimate the exact solution. We cannot in this case employ classical explicit EM method which requires coefficients to be of linear growth (e.g., see \cite{maobook}). Meanwhile, the truncated EM scheme recently developed in \cite{mao4} serves as a useful explicit numerical tool for strong convergent approximation of SDEs with superlinear coefficients. In this work, we aim at investigating the $L^p$ $(\text{where } p\ge 2)$ finite time strong convergence of the truncated EM solutions of system of \textup{SDDE} \eqref{eq:2} under the local Lipschitz condition plus the Khasminskii-type condition. Essentially, this work extends results in \cite{emma} to cope with random jumps.
\par
The rest of the paper is organised as follows: In section 2, we  will study the existence of a unique global solution to \textup{SDDE}  \eqref{eq:2} and show that the solution will always be positive. We will also establish  moment bounds of the exact solution in section 2. In section 3, we will present the truncated EM approximation scheme for \textup{SDDE}  \eqref{eq:2}. Section 4 will be entirely devoted to explore numerical properties of the truncated EM scheme including $L^p$ finite time strong convergence of the truncated EM approximate solutions to the exact solution. In section 5, we will perform some numerical illustrations to support the findings. Finally,  we will apply the strong convergence result within a Monte Carlo framework to value some debt and derivative instruments in section 6.
\section{Analytical properties}
In this section, we establish existence of uniqueness and moment bounds of the exact solution to \textup{SDDE}  \eqref{eq:2}. In sequel, let introduce the following mathematical notations and settings.
\subsection{\textbf{Mathematical preliminaries}}
\noindent Throughout this paper unless otherwise specified, we let $\{\Omega,\mathcal{F}, \{\mathcal{F}_t\}_{t\geq 0}, \mathbb{P} \}$ be a complete probability space with filtration $\{\mathcal{F}_t\}_{t\geq 0}$ satisfying the usual conditions (i.e, it is increasing and right continuous while $\mathcal{F}_0$ contains all $\mathbb{P}$-null sets), and let $\mathbb{E}$ denote the expectation corresponding to $\mathbb{P}$. Let $B(t), t\geq 0$, be a scalar Brownian motion defined on the above probability space. Let $N(t)$ be a scalar Poisson process independent of $B(t)$ with compensated Poisson process $\widetilde{N}(t)=N(t)-\lambda t$, where $\lambda$ is the jump intensity, also defined on the above probability space. If $x, y$ are real numbers, then $x\vee y$ denotes the maximum of $x \text{ and } y$, and $x\wedge y$ denotes the minimum of $x \text{ and } y$. Let $\mathbb{R}=(-\infty,\infty)$ and $\mathbb{R}_+=(0,\infty)$. For $\tau >0$, let $C([-\tau,0];\mathbb{R}_+)$ denote the space of all continuous functions $\xi: [-\tau,0]\rightarrow \mathbb{R}_+$ with the norm $\|\xi\|=\sup_{-\tau\leq t\leq 0}\xi(t)$. For an empty set $\emptyset$, we set $\text{inf }\emptyset=\infty$. For a set $A$, we denote its indication function by $1_A$. Let the following dynamics
\begin{equation}\label{eq:3}
dx(t)=f(x(t^-))dt+\varphi(x((t-\tau)^-))g(x(t^-))dB(t)+h(x(t^-))dN(t),
\end{equation}
$x(t)=\xi(t)$, on $ t\in [-\tau,\infty)$, denote equation of \textup{SDDE}  \eqref{eq:2} such that $f(x)=\alpha_{-1}x^{-1}-\alpha_{0}+\alpha_{1}x-\alpha_{2}x^{\rho}$, $g(x)=x^{\theta}$ and $h(x)=\alpha_{3}x$, $\forall x\in \mathbb{R}_+$, with $\varphi(\cdot)$ defined in $C(\mathbb{R}_+;\mathbb{R}_+)$. Let $C^{2,1}(\mathbb{R}\times \mathbb{R}_+;\mathbb{R})$ be the family of all real-valued functions $Z(x,t)$ defined on $\mathbb{R}\times \mathbb{R}_+$ such that $Z(x,t)$ is twice continuously differentiable in $x$ and once in $t$.  For each $Z\in C^{2,1}(\mathbb{R}\times \mathbb{R}_+;\mathbb{R})$, define the jump-diffusion operator $LZ:\mathbb{R}\times \mathbb{R}\times \mathbb{R}_+\rightarrow \mathbb{R}$ by
\begin{equation}\label{eq:4}
LZ(x,y,t)=\ell(x,y,t)+\lambda(Z(x+h(x),t)-Z(x,t)),
\end{equation}
for SDDE \eqref{eq:3} associated with the $C^{2,1}$-function $Z$, where
\begin{equation}\label{eq:5}
\ell(x,y,t)=Z_t(x,t)+Z_x(x,t)f(x)+\frac{1}{2}Z_{xx}(x,t)\varphi(y)^2g(x)^2,
\end{equation}
$\ell Z:\mathbb{R}\times \mathbb{R}\times \mathbb{R}_+\rightarrow \mathbb{R}$, is the diffusion operator, $Z_t(x,t)$ and $Z_x(x,t)$  are first-order partial derivatives with respect to $t$ and $x$ respectively, and $Z_{xx}(x,t)$, a second-order partial derivative with respect to $x$. With the jump-diffusion operator defined, the It\^{o} formula then yields
\begin{align}\label{eq:7}
dZ(x(t),t)&=LZ(x(t^-),x((t-\tau)^-),t)dt+\varphi(x((t-\tau)^-))Z_x(x(t^-),t)g(x(t^-))dB(t)\\
&+(Z(x(t^-)+h(x(t^-)),t)-Z(x(t^-),t))d\widetilde{N}(t)\nonumber
\end{align}
almost surely. We refer the reader, for instance, to \cite{StoPoisson} for detailed coverage of \eqref{eq:7}. 
\subsection{\textbf{Existence of nonnegative solution}}
\noindent Before we show existence of nonnegative solution to \textup{SDDE}  \eqref{eq:3}, we are required to assume the volatility function $\varphi(\cdot)$ is locally Lipschitz continuous and bounded (see, e.g.,\cite{mao3} for detailed accounts of these conditions). The following conditions are thus sufficient to establish existence of a unique positive global or nonexplosive solution to \textup{SDDE} \eqref{eq:3}.
\begin{assumption}\label{eq:a1}
The volatility function $\varphi:\mathbb{R_+}\rightarrow \mathbb{R_+}$ of \textup{SDDE}  \eqref{eq:3} is Borel-measurable and bounded by a positive constant $\sigma$, i.e.
\begin{equation}\label{eq:8}
  \varphi(y)\leq \sigma,
\end{equation}
$\forall y\in \mathbb{R}_+ $.
\end{assumption}
\begin{assumption}\label{eq:a5}
For any $R>0$, there exists a constant $L_R> 0$ such that the volatility function $\varphi(\cdot)$ of \textup{SDDE}  \eqref{eq:3} satisfies
\begin{equation}\label{eq:20}
|\varphi(y)-\varphi(\bar{y})| \leq L_R |y-\bar{y}|
\end{equation}
$\forall y,\bar{y}\in [1/\mathbb{R},\mathbb{R}]$.
\end{assumption}
\begin{assumption} \label{eq:a2}
The parameters of \textup{SDDE}  \eqref{eq:3} satisfy
\begin{equation}\label{eq:9}
1+ \rho> 2\theta.
\end{equation}
\end{assumption}
The following theorem reveals the \textup{SDDE}  \eqref{eq:3} admits a pathwise-unique positive global solution $x(t)$ on $t\in [-\tau,\infty)$. Since \textup{SDDE}  \eqref{eq:3} describes interest rate dynamics, the solution will always remain nonnegative almost surely.
\begin{theorem}\label{eq:thrm1}
Let Assumptions \ref{eq:a1} and \ref{eq:a2} hold. Then for any given initial data
\begin{equation}\label{eq:10}
\{ x(t): -\tau\leq t \leq 0\}=\xi(t) \in C([-\tau,0]):\mathbb{R}_+),
\end{equation}
there exists a unique global solution $x(t)$ to \textup{SDDE}  \eqref{eq:3} on $t\in [-\tau,\infty)$ and $x(t)>0$ $a.s$.
\end{theorem}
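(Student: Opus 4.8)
The plan is to combine a standard localization (stopping-time) argument with a Khasminskii-type Lyapunov estimate, carried out interval by interval to accommodate the delay. Since $f$, $g$ and $h$ are smooth on $\mathbb{R}_+$ and $\varphi$ is locally Lipschitz by Assumption~\ref{eq:a5}, on the first interval $[0,\tau]$ the delayed argument $x((t-\tau)^-)=\xi(t-\tau)$ is a known bounded process, so \eqref{eq:3} reduces to a jump SDE whose coefficients $f(x)$, $\varphi(\xi(t-\tau))g(x)$ and $h(x)$ are locally Lipschitz in $x$ on every compact subset of $\mathbb{R}_+$; classical theory (e.g.\ \cite{maobook,StoPoisson}) then yields a unique maximal local solution up to an explosion time, and the step method extends this successively to $[\tau,2\tau]$, $[2\tau,3\tau],\dots$. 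It therefore suffices to rule out both explosion to $+\infty$ and extinction at $0$ in finite time. Fixing an integer $k_0$ with $\|\xi\|\in(1/k_0,k_0)$, I would set, for each $k\ge k_0$,
\[
\tau_k=\inf\{t\ge 0:\ x(t)\notin(1/k,k)\},
\]
which is nondecreasing with limit $\tau_\infty=\lim_{k\to\infty}\tau_k$; proving $\tau_\infty=\infty$ a.s.\ simultaneously prevents the upper and lower exits and hence delivers a global \emph{positive} solution.

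The heart of the argument is a Lyapunov function that blows up at both ends of $\mathbb{R}_+$. I would take
\[
V(x)=x^{p}+x^{-q},\qquad x\in\mathbb{R}_+,
\]
with suitably chosen constants $p,q>0$, and compute $LV$ via \eqref{eq:4}--\eqref{eq:5}. Near infinity the dominant drift contribution is $-\alpha_2 p\,x^{p-1+\rho}$, while the diffusion contribution is at most $\tfrac12 p(p-1)\sigma^2 x^{p-2+2\theta}$ by the bound $\varphi\le\sigma$ of Assumption~\ref{eq:a1}; Assumption~\ref{eq:a2} ($1+\rho>2\theta$) is exactly what guarantees $p-1+\rho>p-2+2\theta$, so the negative superlinear drift term dominates. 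Near $0$ the term $-q\alpha_{-1}x^{-q-2}$ arising from $\alpha_{-1}x^{-1}$ is the most singular and again negative, controlling the diffusion contribution $x^{-q-2+2\theta}$, which is less singular since $\theta>1$. Because $h(x)=\alpha_3 x$ gives $x+h(x)=(1+\alpha_3)x$, the jump part equals $\lambda\bigl[((1+\alpha_3)^{p}-1)x^{p}+((1+\alpha_3)^{-q}-1)x^{-q}\bigr]$, which is of the benign orders $x^{p}$ and $x^{-q}$ and is absorbed by the leading terms above. Collecting these estimates I expect a bound of the form
\[
LV(x,y,t)\le K\bigl(1+V(x)\bigr)\qquad\text{for all }x,y\in\mathbb{R}_+,
\]
with $K$ independent of the delay variable $y$, crucially uniform in $y$ thanks to Assumption~\ref{eq:a1}.

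With this estimate the conclusion is routine. Applying the It\^o formula \eqref{eq:7} to $V$ on $[0,t\wedge\tau_k]$, the Brownian and compensated-Poisson integrals are zero-mean martingales, so
\[
\mathbb{E}\,V(x(t\wedge\tau_k))\le V(\xi(0))+K\int_0^t\bigl(1+\mathbb{E}\,V(x(s\wedge\tau_k))\bigr)\,ds,
\]
and Gronwall's inequality yields a bound $M_T$ on $[0,T]$ independent of $k$. Since $V(x(\tau_k))\ge k^{p}\wedge k^{q}$ on $\{\tau_k\le T\}$, this forces $\mathbb{P}(\tau_k\le T)\le M_T/(k^{p}\wedge k^{q})\to 0$, so $\tau_\infty=\infty$ a.s. As $T$ is arbitrary and the step-method extension preserves the same uniform constant $K$ on each interval, the solution exists, is pathwise unique, and remains in $\mathbb{R}_+$ on all of $[-\tau,\infty)$.

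I expect the main obstacle to be the Lyapunov computation itself: one must track signs and exponents in $LV$ carefully enough to confirm that every potentially dangerous term---the superlinear diffusion $x^{p-2+2\theta}$, the singular diffusion near $0$, and the positive jump contribution $x^{p}$---is dominated by the two negative drift terms, and that a single pair $(p,q)$ makes $LV\le K(1+V)$ hold uniformly in $y$. The remaining ingredients (local existence, the stopping-time reduction, and the Gronwall step) are standard.
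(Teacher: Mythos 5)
Your proposal is correct and follows essentially the same route as the paper: a two-sided localization $\tau_k=\inf\{t:x(t)\notin(1/k,k)\}$ combined with a Khasminskii-type Lyapunov function blowing up at both $0$ and $\infty$, with Assumption \ref{eq:a2} ensuring the negative drift terms $-\alpha_2 x^{\rho}$ and $\alpha_{-1}x^{-1}$ dominate the diffusion and jump contributions at the two ends. The only cosmetic differences are your choice $V(x)=x^{p}+x^{-q}$ in place of the paper's $Z(x)=x^{\beta}-1-\beta\log x$ and your use of $LV\le K(1+V)$ plus Gronwall where the paper obtains the sharper uniform bound $LZ\le K_0$ directly (and then iterates over $[0,\tau],[0,2\tau],\dots$); both variants are valid.
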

\begin{proof} Since the coefficient terms of \textup{SDDE}  \eqref{eq:3} are locally Lipschitz continuous in $[-\tau,\infty)$, then there exists a unique positive maximal local solution $x(t)\in [-\tau,\tau_{e})$ for any given initial data \eqref{eq:10}, where $\tau_{e}$ is the explosion time. Let  $n_0>0$ be sufficiently large such that
\begin{equation*}\label{eq:11}
  \frac{1}{n_0}<\underset{-\tau\leq t\leq0}{\min}|\xi(t)|\leq\underset{-\tau\leq t\leq 0}{\max}|\xi(t)|<n_0.
\end{equation*}
For each integer $n\geq n_0$, define the stopping time
\begin{equation}\label{eq:12}
\tau_n=\inf\{ t\in [0,\tau_{e}):x(t)\not\in(1/n,n)\}.
\end{equation}
Obviously, $\tau_n$ is increasing  as $n\rightarrow \infty$. Set $\tau_{\infty}=\underset{n\rightarrow \infty}\lim \tau_n$, whence $\tau_{\infty}\leq \tau_e$ almost surely. In other words, we need to show that $ \tau_{\infty}=\infty$ almost surely to complete the proof. For any $\beta\in(0,1)$, define a $C^2$-function $Z:\mathbb{R_+}\rightarrow \mathbb{R_+}$ by
\begin{equation}\label{eq:13}
Z(x)=x^{\beta}-1-\beta\text{log}(x).
\end{equation}
Clearly $Z(x)\rightarrow \infty $ as $x\rightarrow \infty$ or $x\rightarrow 0$. By Assumption \ref{eq:a1}, we get from the operator in \eqref{eq:4} that
\begin{align*}
 LZ(x,y)&\le \ell Z(x,y)+\lambda\Big((x+\alpha_3x)^{\beta}-1-\beta\log(x+\alpha_3x)-(x^{\beta}-1-\beta\log(x))\Big)\\
  &=\ell Z(x,y)+\lambda\Big(((x+\alpha_3x)^{\beta}-x^{\beta})-\beta\log(x(+\alpha_3)/x)\Big)\\
 &= \ell Z(x,y)+\lambda((1+\alpha_3)^{\beta}-1)x^{\beta}-\lambda \beta\log(1+\alpha_3),
\end{align*}
where
\begin{align*}
\ell(x,y)&=\beta(x^{\beta-1}-x^{-1})(\alpha_{-1}x^{-1}-\alpha_0+\alpha_1x-\alpha_2x^{\rho})+\frac{1}{2}(\beta(\beta-1)x^{\beta-2}
+\beta x^{-2})\varphi(y)^2 x^{2\theta}\\
&\le  \alpha_{-1}\beta x^{\beta-2}-\alpha_0\beta x^{\beta-1}+\alpha_1\beta x^{\beta}-\alpha_2\beta x^{\rho+\beta-1}-\alpha_{-1}\beta x^{-2}+\alpha_0\beta x^{-1}\\
&-\alpha_1\beta+\alpha_2\beta x^{\rho-1}+\frac{\sigma^2}{2}\beta(\beta-1)x^{\beta+2\theta-2}+\frac{\sigma^2}{2}\beta x^{2\theta-2}.
\end{align*}
Since $\beta \in(0,1)$ and by Assumption \ref{eq:a2}, we note $-\alpha_{-1}\beta x^{-2}$ leads and tends to $-\infty$ for small $x$ and for large $x$, $-\alpha_2\beta x^{\rho+\beta-1}$ leads and also tends to $-\infty$. Hence there exists a constant $K_0$ such that
\begin{equation}\label{eq:14}
LZ(x,y)\leq K_0.
\end{equation}
So for $t_1\in [0,\tau]$, we derive from the It\^{o} formula 
\begin{align*}
\mathbb{E}[Z(x(\tau_n\wedge t_1))]\leq Z(\xi(0))+\int_0^{\tau_n\wedge t_1}K_0dt,
\end{align*}
$\forall n\geq n_0$. It then follows that
\begin{equation*}
\mathbb{P}(\tau_n\leq \tau)\leq\frac{Z(\xi(0))+K_0\tau}{Z(1/n)\wedge Z(n)}.
\end{equation*}
As $n\rightarrow \infty$, $ \mathbb{P}(\tau_n\leq \tau)\rightarrow 0 $. This implies $ \tau_{\infty}>\tau$ a.s. Also for $t_1\in [0,2\tau]$, the It\^{o} formula yields
\begin{align*}
\mathbb{E}[Z(x(\tau_n\wedge t_1))]\leq Z(\xi(0))+\int_0^{\tau_n\wedge t_1}K_0dt,
\end{align*}
$\forall n\geq n_0$ and consequently,
\begin{equation*}\label{eq:15}
\mathbb{P}(\tau_n\leq 2\tau)\leq\frac{Z(\xi(0))+2K_0\tau}{Z(1/n)\wedge Z(n)}.
\end{equation*}
As $n\rightarrow \infty$, we get $ \tau_{\infty}>2\tau$ a.s. Repeating this procedure for $t_1\in[0, \infty]$, we obtain $\mathbb{P}(\tau_{\infty}\leq \infty)\rightarrow 0$ by letting $n\rightarrow \infty$. This means $ \tau_{\infty}=\infty $ a.s and hence $ \tau_{e}=\infty$ a.s. The proof is now complete.
\end{proof}
\subsection{\textbf{Moment bound}}
\noindent The following lemmas show moments of the exact solution to \textup{SDDE}  \eqref{eq:3} are upper bounded.
\begin{lemma}\label{eq:l1}
Let Assumptions \ref{eq:a1} and \ref{eq:a2} hold. Then for any $p\geq 2$, there exists a constant $\rho_1$ such that the solution of \textup{SDDE}  \eqref{eq:3} satisfies
\begin{equation}\label{eq:16}
\sup_{0\leq t<\infty}\Big(\mathbb{E}|x(t)|^p\Big)\leq \rho_1.
\end{equation}
\end{lemma}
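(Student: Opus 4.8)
The plan is to run a Khasminskii-type Lyapunov estimate with an exponential time weight, the whole point being that Assumption \ref{eq:a2} forces the dissipative drift to dominate the superlinear diffusion. Fix $p\ge2$ and a constant $c>0$, and apply the jump-diffusion It\^o formula \eqref{eq:7} to the $C^{2,1}$-function $Z(x,t)=e^{ct}x^p$ on $\mathbb{R}_+\times\mathbb{R}_+$. I would first work on the localised horizon $[0,\tau_n\wedge t]$, with $\tau_n$ the stopping times from \eqref{eq:12} (so $x$ stays in $[1/n,n]$ and all integrands are bounded), ensuring the $dB$ and $d\widetilde N$ integrals are true martingales of zero mean.

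A direct computation from \eqref{eq:4}--\eqref{eq:5} gives
\begin{align*}
LZ(x,y,t)&=e^{ct}\Big(cx^{p}+\alpha_{-1}px^{p-2}-\alpha_0px^{p-1}+\alpha_1px^{p}-\alpha_2px^{\rho+p-1}\\
&\quad+\tfrac12 p(p-1)\varphi(y)^2x^{p+2\theta-2}+\lambda\big((1+\alpha_3)^p-1\big)x^{p}\Big).
\end{align*}
By Assumption \ref{eq:a1} I would replace $\varphi(y)^2$ by $\sigma^2$, which removes the delay dependence entirely and makes the bracket a function of $x$ alone. Among the positive powers the largest exponent is $p+2\theta-2$ (recall $\theta>1$), while the only leading negative term is $-\alpha_2px^{\rho+p-1}$. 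Assumption \ref{eq:a2}, i.e. $1+\rho>2\theta$, is exactly $\rho+p-1>p+2\theta-2$, so this negative term has strictly the highest degree and drives the bracket to $-\infty$ as $x\to\infty$; near $x=0$ every exponent is nonnegative because $p\ge2$, so the bracket stays bounded there as well. The bracket is therefore a continuous function bounded above on $\mathbb{R}_+$ uniformly in $y$, and there exists a constant $K_1=K_1(p,c)$ with $LZ(x,y,t)\le K_1e^{ct}$ for all $x,y\in\mathbb{R}_+$ and $t\ge0$.

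Taking expectations over $[0,\tau_n\wedge t]$ annihilates the martingale terms and leaves
\begin{equation*}
\mathbb{E}\big[e^{c(\tau_n\wedge t)}|x(\tau_n\wedge t)|^{p}\big]\le |\xi(0)|^{p}+\mathbb{E}\int_0^{\tau_n\wedge t}K_1e^{cs}\,ds\le |\xi(0)|^{p}+\frac{K_1}{c}\big(e^{ct}-1\big).
\end{equation*}
Letting $n\to\infty$ and invoking Fatou's lemma (Theorem \ref{eq:thrm1} guarantees $\tau_n\wedge t\to t$ a.s.), then dividing by $e^{ct}$, yields
\begin{equation*}
\mathbb{E}|x(t)|^{p}\le e^{-ct}|\xi(0)|^{p}+\frac{K_1}{c}\big(1-e^{-ct}\big)\le \|\xi\|^{p}+\frac{K_1}{c}=:\rho_1,
\end{equation*}
a bound independent of $t$; taking the supremum over $t\ge0$ proves \eqref{eq:16}. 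The one genuinely substantive step is the degree comparison that yields $K_1$: it is precisely here that the coercive term $-\alpha_2x^{\rho}$ must overpower the superlinear volatility $x^{2\theta}$, and this is the sole place where Assumption \ref{eq:a2} and the specific structure of the model are indispensable. Everything else --- localisation, the vanishing of the martingale parts, and the passage $n\to\infty$ --- is routine, and Assumption \ref{eq:a1} keeps the delay harmless so that no delay-specific estimate is required.
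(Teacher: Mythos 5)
Your proposal is correct and follows essentially the same route as the paper: the Lyapunov function $e^{ct}x^p$ (the paper takes $c=1$), localisation via the stopping times $\tau_n$, the bound $LZ\le K_1e^{ct}$ obtained from Assumption \ref{eq:a1} and the degree comparison $\rho+p-1>p+2\theta-2$ supplied by Assumption \ref{eq:a2}, and finally Fatou's lemma. Your write-up is in fact slightly more careful than the paper's (the explicit check near $x=0$ and the correct integration of $K_1e^{cs}$), but the argument is the same.
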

\begin{proof}
Define the stopping time for every sufficiently large integer $n$ by
\begin{equation}\label{stopt}
\tau_n=\inf \{ t\geq 0:x(t)\not\in(1/n,n)\}.
\end{equation}
Define a function $Z\in C^{2,1}(\mathbb{R}_+\times \mathbb{R}_+;\mathbb{R}_+)$ by $Z(x,t)=e^tx^p$ . By Assumption \ref{eq:a1}, the jump-diffusion operator in \eqref{eq:4} gives us
\begin{align*}
 LZ(x,y,t) &\le  \ell Z(x,y,t)+\lambda [e^t(x+\alpha_{3}x)^p-e^tx^p]\\
  & = \ell Z(x,y,t)+\lambda e^tx^p[(1+\alpha_{3})^p-1],
\end{align*}
where
\begin{align*}
\ell Z(x,y,t)&=e^tx^p+pe^tx^{p-1}(\alpha_{-1}x^{-1}-\alpha_{0}+\alpha_{1}x-\alpha_{2}x^{\rho})+\frac{1}{2}p(p-1)e^tx^{p-2}\varphi^2(y)x^{2\theta}\\
&\le e^t[x^p+\alpha_{-1}px^{p-2}-\alpha_{0}px^{p-1}+\alpha_{1}px^{p}-\alpha_{2}px^{\rho+p-1}+\frac{p(p-1)}{2}\sigma^2x^{2\theta+p-2})].
\end{align*}
By Assumption \ref{eq:a2}, $-p\alpha_{2}x^{\rho+p-1}$ dominates and tends to $-\infty$ for large $x$. Hence we can find a constant $K_1$ such that
\begin{equation*}
  LZ(x,y,t)\leq K_1e^t.
\end{equation*}
The It\^{o} formula gives us
\begin{equation*}
\mathbb{E}[e^{t\wedge \tau_n}|x(t\wedge \tau_n)|^p]\leq |\xi(0)|^p+K_1e^t.
\end{equation*}
Applying  the Fatou lemma and letting $n\rightarrow \infty$ yields
\begin{equation*}
\mathbb{E}|x(t)|^p<e^{-t}|\xi(0)|^p+K_1
\end{equation*}
and consequently,
\begin{equation*}
\underset{0\leq t< \infty}{\sup}(\mathbb{E}|x(t)|^p)\leq \rho_1.
\end{equation*}
as the required assertion in \eqref{eq:16}.
\end{proof}
\begin{lemma}\label{eq:l1*}
Let Assumptions \ref{eq:a1} and \ref{eq:a2} hold. For any $p >2 \vee (\rho-1)$, there exists a constant $\rho_2$ such that the solution of \textup{SDDE}  \eqref{eq:3} satisfies
\begin{equation} \label{eq:17}
\sup_{0\leq t<\infty}\Big(\mathbb{E}|\frac{1}{x(t)}|^p\Big)\leq \rho_2.
\end{equation}
\end{lemma}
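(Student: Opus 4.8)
The plan is to reuse the stopping-time / Lyapunov-function machinery of Lemma~\ref{eq:l1}, but with the reciprocal test function. Recall $x(t)>0$ a.s.\ by Theorem~\ref{eq:thrm1}, so $\mathbb{E}|1/x(t)|^p=\mathbb{E}|x(t)|^{-p}$. With $\tau_n$ as in \eqref{stopt}, I would pick $Z\in C^{2,1}(\mathbb{R}_+\times\mathbb{R}_+;\mathbb{R}_+)$ given by $Z(x,t)=e^t x^{-p}$ and aim to show $LZ(x,y,t)\le K_2 e^t$ for some constant $K_2$ independent of $x,y,t$; an application of the It\^o formula \eqref{eq:7} followed by Fatou's lemma then delivers \eqref{eq:17} exactly as in Lemma~\ref{eq:l1}.

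To carry this out, first compute $Z_t=e^tx^{-p}$, $Z_x=-pe^tx^{-p-1}$ and $Z_{xx}=p(p+1)e^tx^{-p-2}$. Substituting $f(x)=\alpha_{-1}x^{-1}-\alpha_0+\alpha_1x-\alpha_2x^\rho$ and $g(x)=x^\theta$ into \eqref{eq:5} and invoking the bound $\varphi(y)\le\sigma$ of Assumption~\ref{eq:a1} gives
\begin{align*}
\ell Z(x,y,t)\le e^t\Big[&x^{-p}-p\alpha_{-1}x^{-p-2}+p\alpha_0x^{-p-1}-p\alpha_1x^{-p}\\
&+p\alpha_2x^{\rho-p-1}+\tfrac{\sigma^2}{2}p(p+1)x^{2\theta-p-2}\Big].
\end{align*}
For the jump part of \eqref{eq:4} I would use $h(x)=\alpha_3 x$ to obtain $\lambda(Z(x+h(x),t)-Z(x,t))=\lambda e^t x^{-p}[(1+\alpha_3)^{-p}-1]$, which is nonpositive since $\alpha_3>0$ and therefore cannot spoil the upper bound.

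The crux is the exponent bookkeeping that makes the bracket bounded above on $(0,\infty)$. As $x\to 0$ the most singular term is $-p\alpha_{-1}x^{-p-2}$: its exponent $-p-2$ is strictly smaller than every other exponent appearing (in particular $-p-2<\rho-p-1$ since $\rho>1$, and $-p-2<2\theta-p-2$ trivially), so this negative term dominates and forces the bracket to $-\infty$. This is where the mean-reverting drift $\alpha_{-1}x^{-1}$, which keeps the solution away from the origin, is essential --- it is the reciprocal analogue of the role played by $-\alpha_2 x^\rho$ for large $x$ in Lemma~\ref{eq:l1}. As $x\to\infty$ the only terms with a chance of growing are $x^{\rho-p-1}$ and $x^{2\theta-p-2}$; the hypothesis $p>\rho-1$ makes the first exponent negative, while Assumption~\ref{eq:a2} gives $2\theta-2<\rho-1<p$, so the second exponent is negative as well and every term tends to $0$. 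Hence the continuous bracket is bounded above, giving $LZ(x,y,t)\le K_2 e^t$.

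Finally, applying \eqref{eq:7} to $Z(x(t),t)=e^t x(t)^{-p}$ and stopping at $\tau_n$ kills the $dB$ and $d\widetilde N$ martingale integrals (their integrands are bounded on $\{x\in(1/n,n)\}$), so taking expectations yields $\mathbb{E}[e^{t\wedge\tau_n}|x(t\wedge\tau_n)|^{-p}]\le|\xi(0)|^{-p}+K_2 e^t$. Letting $n\to\infty$ via Fatou's lemma and dividing by $e^t$ gives $\mathbb{E}|x(t)|^{-p}\le e^{-t}|\xi(0)|^{-p}+K_2\le|\xi(0)|^{-p}+K_2=:\rho_2$, uniformly in $t\ge 0$, which is \eqref{eq:17}. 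I expect the only genuine obstacle to be the near-zero analysis just described; the large-$x$ estimate is routine once Assumption~\ref{eq:a2} is combined with $p>\rho-1$.
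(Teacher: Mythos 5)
Your proposal is correct and follows essentially the same route as the paper: the Lyapunov function $Z(x,t)=e^t x^{-p}$, the localization by $\tau_n$, the observation that the jump contribution $\lambda e^t x^{-p}[(1+\alpha_3)^{-p}-1]$ is nonpositive, the dominance of $-p\alpha_{-1}x^{-p-2}$ near zero and the decay of $x^{\rho-p-1}$ and $x^{2\theta-p-2}$ at infinity under $p>\rho-1$ and Assumption \ref{eq:a2}, followed by It\^o and Fatou. Your exponent bookkeeping is in fact slightly more careful than the paper's (which carries a sign typo on the diffusion term), so no changes are needed.
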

\begin{proof}
Let $\tau_n$ be the same as in \eqref{stopt}. By applying \eqref{eq:4} to $Z(x,t)=e^t/x^p$, we compute
\begin{align*}
 LZ(x,y,t) &\le \ell Z(x,y,t)+\lambda[e^t(x+\alpha_3x)^{-p}-e^tx^{-p}]\\
  &= \ell Z(x,y,t)+\lambda e^tx^{-p}[(1+\alpha_{3})^{-p}-1],
\end{align*}
where  Assumption \ref{eq:a1} has been used and here, we have
\begin{align*}
\ell Z(x,y,t)&= e^tx^{-p}-pe^tx^{-(p+1)}(\alpha_{-1}x^{-1}-\alpha_{0}+\alpha_{1}x-\alpha_{2}x^{\rho})+\frac{1}{2}p(p+1)e^tx^{-(p+2)}\varphi(y)^2x^{2\theta}\\
&\leq e^t[x^{-p}-\alpha_{-1}px^{-(p+2)}+\alpha_{0}px^{-(p+1)}-\alpha_{1}px^{-p}+\alpha_{2}x^{\rho-p-1}-\frac{p(p+1)}{2}\sigma^2x^{2\theta-p-2})].
\end{align*}
By Assumption \ref{eq:a2} and noting that $p >2 \vee (\rho-1)$, we observe $-\alpha_{-1}px^{-(p+2)}$ leads and tends to $-\infty$ for small $x$ and for large $x$, $p\alpha_{2}x^{\rho-p-1}$ dominates and tends to 0. Hence there exists a constant $K_2$ such that
\begin{equation*}
  LZ(x,y,t,)\leq K_2e^t.
\end{equation*}
We can now use the It\^{o} formula, apply Fatou lemma and let $n\rightarrow \infty$ to arrive at \begin{equation*}
\mathbb{E}|x(t)|^{-p}<e^{-t}|\xi(0)|^{-p}+K_2
\end{equation*}
and consequently the required assertion in \eqref{eq:17}.
\end{proof}
\section{The truncated EM method}
\noindent In this section, we present the truncated EM scheme for numerical approximation of \textup{SDDE}  \eqref{eq:3}. Meanwhile,  we need the following assumption on the initial data which will be used later.
\begin{assumption}\label{eq:a3}
There is a pair of constant $K_3>0$ and $\gamma\in (0,1]$ such that for all $ -\tau\leq s \leq t \leq 0$, the initial data $\xi$ satisfies
\begin{equation}\label{eq:18}
  |\xi(t)-\xi(s)|\leq K_3|t-s|^\gamma.
\end{equation}
\end{assumption}
\noindent In the sequel, we also need these lemmas below.
\begin{lemma}\label{eq:l2}
For any $R>0$, there exists a constant $K_R > 0$ such that the coefficient terms $f$, $g$ and $h$ of \textup{SDDE}  \eqref{eq:3} satisfy
\begin{equation}\label{eq:21}
|f(x)-f(\bar{x})|\vee |g(x)-g(\bar{x})|\vee |h(x)-h(\bar{x})| \leq K_R|x-\bar{x}|,
\end{equation}
$\forall x,\bar{x} \in [1/\mathbb{R},\mathbb{R}]$.
\end{lemma}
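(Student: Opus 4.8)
The plan is to exploit the fact that on the compact interval $[1/R,R]$ each of the three coefficients $f$, $g$ and $h$ is continuously differentiable with a uniformly bounded derivative, so the estimate \eqref{eq:21} follows at once from the mean value theorem applied to each coefficient separately. I would then take $K_R$ to be the maximum of the three resulting Lipschitz constants, since controlling the maximum of the three differences is equivalent to controlling each difference by a common constant.

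First, the linear term is immediate: since $h(x)=\alpha_3 x$ we have $|h(x)-h(\bar x)|=\alpha_3|x-\bar x|$ for all $x,\bar x$, so any choice $K_R\ge \alpha_3$ handles this factor with no further work. Next, for $g(x)=x^{\theta}$ I would write $g(x)-g(\bar x)=\theta\,\eta^{\theta-1}(x-\bar x)$ for some $\eta$ lying between $x$ and $\bar x$, whence $\eta\in[1/R,R]$; since $\theta>1$ the map $\eta\mapsto\theta\eta^{\theta-1}$ is increasing and bounded above by $\theta R^{\theta-1}$ on this interval, giving $|g(x)-g(\bar x)|\le \theta R^{\theta-1}|x-\bar x|$.

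The only term requiring any care is $f$, because it contains the singular piece $\alpha_{-1}x^{-1}$ and the superlinear piece $-\alpha_2 x^{\rho}$. Differentiating yields $f'(x)=-\alpha_{-1}x^{-2}+\alpha_1-\alpha_2\rho\, x^{\rho-1}$, and the key observation is that restricting to $[1/R,R]$ tames both bad terms simultaneously: the lower bound $x\ge 1/R$ gives $x^{-2}\le R^2$, while the upper bound $x\le R$ gives $x^{\rho-1}\le R^{\rho-1}$. Hence $|f'(x)|\le \alpha_{-1}R^2+\alpha_1+\alpha_2\rho R^{\rho-1}=:C_R$ uniformly on the interval, and the mean value theorem again delivers $|f(x)-f(\bar x)|\le C_R|x-\bar x|$ for all $x,\bar x\in[1/R,R]$.

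Finally I would set
\[
K_R=\big(\alpha_{-1}R^2+\alpha_1+\alpha_2\rho R^{\rho-1}\big)\vee\big(\theta R^{\theta-1}\big)\vee\alpha_3,
\]
and combining the three bounds yields \eqref{eq:21}. There is no genuine obstacle here: the entire content of the lemma is that the singularity of $f$ at $0$ and the polynomial growth of $f$ and $g$ at $\infty$ are harmless once one stays bounded away from both endpoints, so the estimate is purely local and the constant $K_R$ is naturally permitted to blow up as $R\to\infty$ or $R\to 0$.
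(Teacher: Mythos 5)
Your proof is correct and is exactly the standard argument the paper implicitly relies on: the paper states this lemma without proof, treating it as immediate from the fact that $f$, $g$ and $h$ are continuously differentiable on the compact interval $[1/R,R]$, which is precisely your mean-value-theorem computation with the explicit constant $K_R=\big(\alpha_{-1}R^2+\alpha_1+\alpha_2\rho R^{\rho-1}\big)\vee\theta R^{\theta-1}\vee\alpha_3$. No discrepancy to report.
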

\begin{lemma}\label{eq:l3}
Let Assumptions \ref{eq:a1} and \ref{eq:a2} hold. Then for any $p\geq 2$, there exists $K_4>0$ such that the drift and diffusion terms of \textup{SDDE}  \eqref{eq:3} satisfy
\begin{equation}\label{eq:22}
xf(x)+\frac{p-1}{2}|\varphi(y)g(x)|^2\leq K_4(1+|x|^2),
\end{equation}
$\forall x,y\in \mathbb{R}_+$, where $K_4$ is a constant (see  \cite{emma} for the proof).
\end{lemma}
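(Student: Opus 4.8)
The plan is to substitute the explicit forms of the coefficients and reduce the claimed inequality to a one-variable estimate in $x$, absorbing the $y$-dependence entirely through the boundedness of $\varphi$. First I would compute $xf(x)=\alpha_{-1}-\alpha_{0}x+\alpha_{1}x^{2}-\alpha_{2}x^{\rho+1}$, and invoke Assumption \ref{eq:a1} to bound the diffusion contribution by $\frac{p-1}{2}|\varphi(y)g(x)|^{2}\le \frac{(p-1)\sigma^{2}}{2}x^{2\theta}$, which no longer depends on $y$. Adding these gives
\begin{equation*}
xf(x)+\frac{p-1}{2}|\varphi(y)g(x)|^{2}\le \alpha_{-1}-\alpha_{0}x+\alpha_{1}x^{2}-\alpha_{2}x^{\rho+1}+\frac{(p-1)\sigma^{2}}{2}x^{2\theta}.
\end{equation*}
Since $x>0$, the term $-\alpha_{0}x$ is non-positive and may be discarded without weakening the upper bound.

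The remaining task is to control the two superlinear monomials $-\alpha_{2}x^{\rho+1}$ and $\frac{(p-1)\sigma^{2}}{2}x^{2\theta}$. Here Assumption \ref{eq:a2}, namely $1+\rho>2\theta$, is the crucial ingredient: it guarantees that the negative monomial has strictly higher degree than the positive one. I would therefore consider the auxiliary function $\phi(x)=-\alpha_{2}x^{\rho+1}+\frac{(p-1)\sigma^{2}}{2}x^{2\theta}$ on $(0,\infty)$ and show that it is bounded above. Indeed, $\phi$ is continuous, tends to $0$ as $x\to 0^{+}$ (both exponents being positive) and to $-\infty$ as $x\to\infty$ (because $\rho+1>2\theta$), so it attains a finite maximum $M:=\sup_{x>0}\phi(x)<\infty$.

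Combining these observations yields
\begin{equation*}
xf(x)+\frac{p-1}{2}|\varphi(y)g(x)|^{2}\le \alpha_{-1}+M+\alpha_{1}x^{2},
\end{equation*}
and choosing $K_{4}=\max\{\alpha_{-1}+M,\,\alpha_{1}\}$ gives the desired bound $K_{4}(1+|x|^{2})$; positivity of $K_{4}$ is automatic since $\alpha_{1}>0$. I do not expect any serious obstacle in this argument. The only point requiring genuine care is verifying that $\phi$ is bounded above over the \emph{whole} half-line rather than merely eventually negative for large $x$ — that is, that the estimate is uniform for small $x$ as well — and this is precisely where the positivity of both exponents and the strict inequality in Assumption \ref{eq:a2} are used together.
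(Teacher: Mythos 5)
Your proof is correct and is essentially the standard argument that the paper defers to \cite{emma}: bound $\varphi(y)\le\sigma$ via Assumption \ref{eq:a1}, expand $xf(x)$ explicitly, and use Assumption \ref{eq:a2} ($1+\rho>2\theta$) so that the negative monomial $-\alpha_2 x^{\rho+1}$ dominates $\tfrac{(p-1)\sigma^2}{2}x^{2\theta}$ for large $x$ while everything stays bounded near $x=0$. This is the same dominance-of-the-highest-degree-negative-term reasoning the author uses throughout (e.g.\ in Lemmas \ref{eq:l1} and \ref{eq:l1*}), and your care in checking boundedness of $\phi$ on the whole half-line, not just for large $x$, is exactly the right point to verify.
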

\subsection{\textbf{Numerical approximation}}
\noindent Before we proceed, let extend the volatility function $\varphi(\cdot)$ and the jump term $h(\cdot)$ from $\mathbb{R}_+$ to $\mathbb{R}$ by setting $\varphi(y)=\varphi(0)$ and $h(x)=0$ for $x<0$. Apparently, Theorem \ref{eq:thrm1}  as well as conditions \eqref{eq:8}, \eqref{eq:20}, \eqref{eq:21} and \eqref{eq:22} are well maintained. Moreover, we need not truncate the jump term since it is of linear growth. To define the truncated EM scheme for \textup{SDDE}  \eqref{eq:3}, we first choose a strictly increasing continuous function $\mu: \mathbb{R}_+ \rightarrow \mathbb{R}_+$ such that $\mu(r)\rightarrow \infty$ as $r\rightarrow \infty$ and
\begin{equation}\label{eq:23}
\sup_{1/r \leq x\leq r}(|f(x)|\vee g(x))\leq \mu(r), \quad \forall r> 1.
\end{equation}
Denote by $\mu^{-1}$ the inverse function of $\mu$. We define a strictly decreasing function
$\pi:(0,1)\rightarrow \mathbb{R}_+$ such that
\begin{equation}\label{eq:24}
\quad \lim_{\Delta \rightarrow 0}\pi(\Delta)=\infty \text{  and  }\Delta^{1/4}\pi(\Delta)\leq 1, \quad \forall \Delta \in (0,1].
\end{equation}
Find $\Delta^*\in (0,1)$ such that $\mu^{-1}(\pi(\Delta^*))>1$ and $f(x)>0$ for $0<x<\Delta^*$. For a given step size $\Delta \in (0,\Delta^*)$, let us define the truncated functions
\begin{equation*}
f_{\Delta}(x)=f\Big(1/\mu^{-1}(\pi(\Delta))\vee (x\wedge \mu^{-1}(\pi(\Delta)))\Big), \quad \forall x \in \mathbb{R}
\end{equation*}
and
\begin{equation*}
g_{\Delta}(x)=
\begin{cases}
  g\Big(x\wedge \mu^{-1}(\pi(\Delta))\Big), & \mbox{if $x\geq 0$ }\\
  0,                             & \mbox{if $x< 0$}.
\end{cases}
\end{equation*}
So for $x\in [1/\mu^{-1}(\pi(\Delta)),\mu^{-1}(\pi(\Delta))]$, we  have
\begin{align*}
|f_{\Delta}(x)|&=|f(x)|\leq \underset{1/\mu^{-1}(\pi(\Delta))\leq w\leq\mu^{-1}(\pi(\Delta))}{\max |f(w)|}\\
             &\leq \mu(\mu^{-1}(\pi(\Delta)))= \pi(\Delta)
\end{align*}
and
\begin{align*}
g_{\Delta}(x)\leq \mu(\mu^{-1}(\pi(\Delta)))=\pi(\Delta).
\end{align*}
We easily observe that
\begin{equation}\label{eq:25}
|f_{\Delta}(x)|\vee g_{\Delta}(x)\leq \pi(\Delta), \quad \forall x \in \mathbb{R}.
\end{equation}\noindent
That is, both truncated functions $f_{\Delta}$ and $g_{\Delta}$ are bounded although both $f$ and $g$ may not. The following lemma shows $f_{\Delta}$ and $g_{\Delta}$ maintain \eqref{eq:22} nicely.
\begin{lemma}\label{eq:l4}
Let Assumptions \ref{eq:a1} and \ref{eq:a2} hold. Then, for all $\Delta \in (0,\Delta^*)$ and $p\geq 2$, the truncated functions satisfy
\begin{equation}\label{eq:26}
xf_{\Delta}(x)+\frac{p-1}{2}|\varphi(y)g_{\Delta}(x)|^2\leq K_5(1+|x|^2)
\end{equation}
$\forall x,y\in \mathbb{R}$, where $K_5$ is a constant independent of $\Delta$ (see  \cite{emma} for the proof).
\end{lemma}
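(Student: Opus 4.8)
The plan is to reduce the truncated inequality \eqref{eq:26} to the already-established untruncated estimate \eqref{eq:22} wherever the truncation is inactive, and to treat the remaining regions by direct estimation, with Assumption \ref{eq:a2} as the decisive ingredient. Throughout, write $u=\mu^{-1}(\pi(\Delta))$ and recall that $u>1$ for $\Delta\in(0,\Delta^{*})$, while Assumption \ref{eq:a1} gives $\varphi(y)^2\le\sigma^2$ for every $y\in\mathbb{R}$. Inspecting the definitions of $f_\Delta$ and $g_\Delta$, the argument $x$ is effectively clamped to $[1/u,u]$, so I would partition $\mathbb{R}$ into the four regions $x\in[1/u,u]$, $x>u$, $0\le x<1/u$ and $x<0$, and establish the bound on each with a constant that does not depend on $\Delta$ (equivalently, not on $u$).

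On the central region $x\in[1/u,u]$ one has $f_\Delta(x)=f(x)$ and $g_\Delta(x)=g(x)$, so \eqref{eq:26} is precisely \eqref{eq:22} with constant $K_4$. For $0\le x<1/u$ we have $f_\Delta(x)=f(1/u)$ and $g_\Delta(x)=x^\theta$; since $0\le x<1/u$, the products $x\,\alpha_{-1}u$ and $x\,\alpha_1 u^{-1}$ are each bounded by a $u$-independent constant while $x^{2\theta}<1$, so the left-hand side is bounded by a constant. For $x<0$ we have $g_\Delta(x)=0$ and $f_\Delta(x)=f(1/u)$; here I would observe that $f(1/u)\to+\infty$ as $u\to\infty$, so the set of $u\ge1$ with $f(1/u)\le0$ is bounded and $M:=\sup\{|f(1/u)|:u\ge1,\,f(1/u)\le0\}<\infty$. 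Then $xf(1/u)\le0$ when $f(1/u)\ge0$, and $xf(1/u)\le M|x|\le M(1+x^2)$ otherwise, giving the bound with constant $M$.

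The decisive case is $x>u$, where $f_\Delta(x)=f(u)$ and $g_\Delta(x)=u^{\theta}$, so that the left-hand side of \eqref{eq:26} satisfies
\[
x f(u)+\tfrac{p-1}{2}\varphi(y)^2 u^{2\theta}
\le x\alpha_{-1}u^{-1}+x\alpha_1 u-x\alpha_2 u^{\rho}+\tfrac{p-1}{2}\sigma^2 u^{2\theta}.
\]
The positive drift contributions are controlled by $x^2$: since $u>1$ and $u<x$ one has $x\alpha_{-1}u^{-1}\le\alpha_{-1}(1+x^2)$ and $x\alpha_1 u\le\alpha_1 x^2$. The remaining difficulty is to absorb the diffusion term $\tfrac{p-1}{2}\sigma^2 u^{2\theta}$, which cannot be dominated by $x^2$ because $\theta>1$. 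Here I would again use $x>u$ to estimate the superlinear negative drift as $-x\alpha_2 u^{\rho}\le-\alpha_2 u^{\rho+1}$, and then invoke Assumption \ref{eq:a2}: since $\rho+1>2\theta$, the one-variable function $u\mapsto-\alpha_2 u^{\rho+1}+\tfrac{p-1}{2}\sigma^2 u^{2\theta}$ tends to $-\infty$ and is hence bounded above on $[1,\infty)$ by a constant $C$ independent of $u$. Combining yields $xf(u)+\tfrac{p-1}{2}\varphi(y)^2 u^{2\theta}\le(\alpha_{-1}+\alpha_1)x^2+\alpha_{-1}+C$.

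Taking $K_5$ to be the maximum of the four regional constants (all independent of $\Delta$) then gives \eqref{eq:26} for every $x,y\in\mathbb{R}$. I expect the case $x>u$ to be the main obstacle, precisely because the diffusion term is evaluated at the truncation level $u$ rather than at $x$, so it cannot be reabsorbed into $x^2$ directly; the whole estimate hinges on transferring the superlinear decay of the drift down to that level through the inequality $x>u$ and on the Khasminskii balance $\rho+1>2\theta$ supplied by Assumption \ref{eq:a2}.
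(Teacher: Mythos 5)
Your proof is correct. Note that the paper itself does not supply an argument for this lemma --- it simply defers to the reference \cite{emma} --- so there is no in-text proof to compare against; judged on its own, your four-region case analysis is a complete and self-contained derivation, and it follows the standard pattern used for such ``truncation preserves the Khasminskii condition'' lemmas. The central region reduces to Lemma \ref{eq:l3} exactly as you say; the regions $0\le x<1/u$ and $x<0$ are handled correctly (your observation that $f(1/u)\to+\infty$ as $u\to\infty$, so that the set where $f(1/u)\le 0$ is compact, neatly disposes of the sign issue for negative $x$, and in fact the paper's choice of $\Delta^*$ with $f(x)>0$ for $0<x<\Delta^*$ is designed to make $f(1/u)$ positive there anyway). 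You also correctly identify the only genuinely delicate case, $x>u$: since $g_\Delta(x)=u^\theta$ is frozen at the truncation level while $x$ is not, the diffusion term cannot be absorbed into $x^2$, and the inequality $-x\alpha_2u^\rho\le-\alpha_2u^{\rho+1}$ combined with $\rho+1>2\theta$ (Assumption \ref{eq:a2}) is exactly what makes $-\alpha_2u^{\rho+1}+\tfrac{p-1}{2}\sigma^2u^{2\theta}$ bounded above uniformly in $u\ge 1$, hence uniformly in $\Delta$. The resulting $K_5$ depends on $p$ and the model parameters but not on $\Delta$, which is what the lemma requires.
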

\noindent From now on, let $T>0$ be fixed arbitrarily and the step size $\Delta\in (0,\Delta^*]$ be a fraction of $\tau$.  We define $\Delta=\tau/M$ for some positive integer $M$. Let now form the discrete-time truncated EM approximation of \textup{SDDE} \eqref{eq:3}. Define $t_k=k\Delta$ for $k=-M,-(M-1),..,0,1,2,..$. Set $X_{\Delta}(t_k)= \xi(t_k)$ for $k=-M,-(M-1),..,0$  and then compute
\begin{equation}\label{eq:27}
X_{\Delta}(t_{k+1})=X_{\Delta}(t_k)+f_{\Delta}(X_{\Delta}(t_k))\Delta+\varphi(X_{\Delta}(t_{k-M}))g_{\Delta}(X_{\Delta}(t_{k}))\Delta B_k+h(X_{\Delta}(t_{k}))\Delta N_k
\end{equation}
for $k=0,1,2...,$ where $\Delta B_k=B(t_{k+1})-B(t_k)$ and $\Delta N_k=N(t_{k+1})-N(t_k)$. Let now form two versions of the continuous-time truncated EM solutions. The first one is defined by
\begin{equation}\label{eq:28}
\bar{x}_{\Delta}(t)=\sum_{k=-M}^{\infty}X_{\Delta}(t_k)1_{[t_k,t_{k+1})}(t).
\end{equation}
This is the continuous-time step-process $\bar{x}_{\Delta}(t)$ on $t\in [-\tau, \infty]$, where $1_{[t_k,t_{k+1})}$ is the indicator function on $[t_k,t_{k+1})$. The other one is the continuous-time continuous process $x_{\Delta}(t)$ on $t\ge -\tau$ defined conveniently by setting $x_{\Delta}(t)=\xi(t)$ for $t\in [-\tau,0]$ while for $t\geq 0$
\begin{equation}\label{eq:29}
x_{\Delta}(t)=\xi(0)+\int_0^t f_{\Delta}(\bar{x}_{\Delta}(s^-))ds+\int_0^t \varphi(\bar{x}_{\Delta}((s-\tau)^-))g_{\Delta}(\bar{x}_{\Delta}(s^-))dB(s)+\int_0^t h(\bar{x}_{\Delta}(s^-))dN(s).
\end{equation}
Obviously $x_{\Delta}(t)$ is an It\^{o} process on $t\geq 0$ satisfying It\^{o} differential
\begin{equation}\label{eq:30}
dx_{\Delta}(t)= f_{\Delta}(\bar{x}_{\Delta}(t^-))dt+\varphi(\bar{x}_{\Delta}((t-\tau)^-))g_{\Delta}(\bar{x}_{\Delta}(t^-))dB(t)+h(\bar{x}_{\Delta}(t^-))dN(t).
\end{equation}
\noindent For all $k=-M,-(M-1),..$, it is useful to see that $x_{\Delta}(t_{k})=\bar{x}_{\Delta}(t_k)=X_{\Delta}(t_k)$.
\section{\textbf{Numerical properties}}
In this section, we establish moment bound and finite time strong convergence theory of the truncated EM solutions to \textup{SDDE} \eqref{eq:3}.
\subsection{\textbf{Moment bound}}
To upper bound the moment of the truncated EM solution, let first define 
\begin{equation*}
k(t)=[t/\Delta]\Delta,
\end{equation*}
for any $t\in [0,T]$, where $[t/\Delta]$ denotes the integer part of $t/\Delta$. The following lemma shows $x_{\Delta}(t)$ and $\bar{x}_{\Delta}(t)$ are close to each other in $L^p$.  
\begin{lemma}\label{eq:l5}
Let Assumption \ref{eq:a1} hold. Then for any fixed $\Delta\in (0,\Delta^*]$, we have
\begin{equation}\label{eq:31}
 \mathbb{E}\Big(|x_{\Delta}(t)-\bar{x}_{\Delta}(t)|^{p}\big| \mathcal{F}_{k(t)}\Big)\leq \mathfrak{D}_1\Big(\Delta^{p/2}(\pi(\Delta))^{p}+\Delta\Big)|\bar{x}_{\Delta}(t)|^{p}, \quad  p\in[2,\infty)
\end{equation}
and 
\begin{equation}\label{eq:31*}
 \mathbb{E}\Big(|x_{\Delta}(t)-\bar{x}_{\Delta}(t)|^{p}\big| \mathcal{F}_{k(t)}\Big)\leq \mathfrak{D}_2\Big(\Delta^{p/2}(\pi(\Delta))^{p}\Big)|\bar{x}_{\Delta}(t)|^{p}, \quad  p\in(0,2),
\end{equation}
for all $t\geq 0$, where $\mathfrak{D}_1$ and $\mathfrak{D}_2$ denote positive generic constants which depend only on $p$ and may change between occurrences.
\end{lemma}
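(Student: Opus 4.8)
The plan is to localise the estimate to a single mesh subinterval. Fix $t\ge 0$ and recall $k(t)=[t/\Delta]\Delta$, so that the step process $\bar{x}_{\Delta}$ is constant on $[k(t),t]$, taking the $\mathcal{F}_{k(t)}$-measurable value $\bar{x}_{\Delta}(t)=x_{\Delta}(k(t))$. Starting from the integral form \eqref{eq:29} of the continuous process, the increment $x_{\Delta}(t)-\bar{x}_{\Delta}(t)=x_{\Delta}(t)-x_{\Delta}(k(t))$ collapses, because each integrand is frozen at its left-endpoint value, to the sum of a drift piece $f_{\Delta}(\bar{x}_{\Delta}(t))(t-k(t))$, a diffusion piece $\varphi(\bar{x}_{\Delta}((t-\tau)^-))g_{\Delta}(\bar{x}_{\Delta}(t))(B(t)-B(k(t)))$ and a jump piece $h(\bar{x}_{\Delta}(t))(N(t)-N(k(t)))$. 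The three coefficient factors are $\mathcal{F}_{k(t)}$-measurable, while the Brownian and Poisson increments over $[k(t),t]$ are independent of $\mathcal{F}_{k(t)}$; this independence is precisely what allows the conditional expectation to factorise in each term.

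Next I would apply the elementary inequality $|a+b+c|^{p}\le 3^{p-1}(|a|^{p}+|b|^{p}+|c|^{p})$ and bound the three conditional expectations separately. For the drift piece, \eqref{eq:25} gives $|f_{\Delta}|\le \pi(\Delta)$ and $t-k(t)\le \Delta$, contributing at most $(\pi(\Delta))^{p}\Delta^{p}$. For the diffusion piece I use $\varphi\le\sigma$ (Assumption \ref{eq:a1}), $g_{\Delta}\le\pi(\Delta)$ (again \eqref{eq:25}), and the Gaussian moment identity $\mathbb{E}(|B(t)-B(k(t))|^{p}\mid\mathcal{F}_{k(t)})=c_{p}(t-k(t))^{p/2}\le c_{p}\Delta^{p/2}$, producing a term of order $\Delta^{p/2}(\pi(\Delta))^{p}$. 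Since $\Delta\le 1$ by \eqref{eq:24}, the drift contribution $\Delta^{p}(\pi(\Delta))^{p}$ is dominated by the diffusion contribution $\Delta^{p/2}(\pi(\Delta))^{p}$, so these two terms are already of the required shape.

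The jump piece is where the new work lies and where the two moment regimes arise. As $h(x)=\alpha_{3}x$, this term equals $\alpha_{3}^{p}|\bar{x}_{\Delta}(t)|^{p}\,\mathbb{E}(|N(t)-N(k(t))|^{p}\mid\mathcal{F}_{k(t)})$, and by independence the conditional moment is just the $p$-th moment of a Poisson variable with parameter $\mu=\lambda(t-k(t))\le\lambda\Delta$. For $p\ge 2$, expanding this moment and using $\mu\le\lambda\Delta$ together with $\Delta\le 1$ bounds it by a constant multiple of $\Delta$, giving a contribution of the form $\Delta|\bar{x}_{\Delta}(t)|^{p}$; for $p\in(0,2)$, Jensen's inequality applied to the concave map $r\mapsto r^{p/2}$ yields $\mathbb{E}|N(t)-N(k(t))|^{p}\le(\mu+\mu^{2})^{p/2}\le C\Delta^{p/2}$, which (using $\pi(\Delta)\ge 1$ for small $\Delta$) is absorbed into $\Delta^{p/2}(\pi(\Delta))^{p}|\bar{x}_{\Delta}(t)|^{p}$ and so needs no separate $\Delta$ term. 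Collecting the three estimates and relabelling constants then delivers \eqref{eq:31} for $p\in[2,\infty)$ and \eqref{eq:31*} for $p\in(0,2)$.

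The hard part will be the jump term: unlike the classical diffusion-only truncated EM lemma, one must control the $p$-th moment of the Poisson increment uniformly in the step size and track carefully that it scales like $\Delta$ when $p\ge 2$ but only like $\Delta^{p/2}$ when $p\in(0,2)$, which is exactly the distinction between the two asserted bounds. Some care is also required to justify the freezing of the integrands on $[k(t),t]$ and the conditional independence of the increments, since these facts underpin the factorisation used throughout.
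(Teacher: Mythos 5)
Your proposal is correct and follows essentially the same route as the paper: the same three-way decomposition of $x_{\Delta}(t)-\bar{x}_{\Delta}(t)$ into frozen drift, diffusion and jump pieces, the same bounds $\Delta^{p}(\pi(\Delta))^{p}$, $\Delta^{p/2}(\pi(\Delta))^{p}$ and $\bar{c}\,\alpha_3^{p}|\bar{x}_{\Delta}(t)|^{p}\Delta$ via \eqref{eq:25}, Assumption \ref{eq:a1} and the moment bound $\mathbb{E}|\Delta N_k|^{p}\le\bar{c}\Delta$ for the Poisson increment. The only cosmetic difference is in the case $p\in(0,2)$, where the paper applies conditional Jensen to the whole expression using the $p=2$ bound, whereas you apply Jensen only to the Poisson moment and re-estimate the other two terms directly; both yield the asserted $\Delta^{p/2}(\pi(\Delta))^{p}$ rate.
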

\begin{proof}
Fix any $\Delta \in (0,\Delta^*)$ and $t\in [0,T]$. Then for $ p\in[2,\infty)$, we derive 
\begin{align*}
&\mathbb{E}\Big(|x_{\Delta}(t)-\bar{x}_{\Delta}(t)|^{p}\big| \mathcal{F}_{k(t)}\Big)\\
&\le 3^{p-1}\Big(\mathbb{E}\big(| \int_{k(t)}^{t} f_{\Delta}(\bar{x}_{\Delta}(s))ds|^{p}\big| \mathcal{F}_{k(t)}\big)+\mathbb{E}\big(|\int_{k(t)}^t \varphi(\bar{x}_{\Delta}((s-\tau)))g_{\Delta}(\bar{x}_{\Delta}(s))dB(s)|^{p}\big| \mathcal{F}_{k(t)}\big)\\
&+\mathbb{E}\big(|\int_{k(t)}^t h(\bar{x}_{\Delta}(s))dN(s)|^{p}\big| \mathcal{F}_{k(t)}\big) \Big)\\
&\leq 3^{p-1} \Big(\Delta^{p-1}\mathbb{E}( \int_{k(t)}^t |f_{\Delta}(\bar{x}_{\Delta}(s))|^{p} ds\big| \mathcal{F}_{k(t)})+ c(p)\Delta^{(p-2)/2} \mathbb{E}(\int_{k(t)}^{t}|\varphi(\bar{x}_{\Delta}((s-\tau)))g_{\Delta}(\bar{x}_{\Delta}(s))|^{p} ds\big| \mathcal{F}_{k(t)})\\
&+\mathbb{E}(|\int_{k(t)}^t h(\bar{x}_{\Delta}(s))dN(s)|^{p}\big| \mathcal{F}_{k(t)} )\Big)\\
&\leq 3^{p-1}\Big(\Delta^{p-1}\Delta (\pi(\Delta))^{p} +c(p)\Delta^{(p-2)/2}\Delta (\sigma \pi(\Delta))^{p}+\mathbb{E}(|\int_{k(t)}^t h(\bar{x}_{\Delta}(s))dN(s)|^{p}\big|
\mathcal{F}_{k(t)}) \Big),
\end{align*}
where Assumption \ref{eq:a1} and \eqref{eq:25} have been used and $c(p)$ depends on $p$. By the characteristic function's argument (see \cite{HiKlo}), we have
\begin{equation*}
\mathbb{E}|\Delta N_k|^{p} \leq \bar{c}\Delta, \quad \forall \Delta\in (0,\Delta^*),
\end{equation*}
where $\bar{c}$ is a positive constant independent of  $\Delta$. We now obtain 
\begin{align*}
\mathbb{E}(|\int_{k(t)}^t h(\bar{x}_{\Delta}(s))dN(s)|^{p}\big|\mathcal{F}_{k(t)})
&=|h(\bar{x}_{\Delta}(t))|^{p}\mathbb{E}|\Delta N_k|^{p}.
\end{align*}
This implies 
\begin{align*}
\mathbb{E}\Big(|x_{\Delta}(t)-\bar{x}_{\Delta}(t)|^{p}\big| \mathcal{F}_{k(t)}\Big)
&\leq 3^{p-1}\Big(\Delta^{p-1}\Delta (\pi(\Delta))^{p} +c(p)\Delta^{(p-2)/2}\Delta (\sigma \pi(\Delta))^{p}+|h(\bar{x}_{\Delta}(t))|^{p} \mathbb{E}|\Delta N_k|^{p} \Big),
\end{align*}
where $h(\bar{x}_{\Delta}(t))$ is independent of $ N_k$. We now have
\begin{align*}
\mathbb{E}\Big(|x_{\Delta}(t)-\bar{x}_{\Delta}(t)|^{p}\big| \mathcal{F}_{k(t)}\Big)
&\leq 3^{p-1}\Big((1\vee c(p)\sigma^{p})\Delta^{p/2}(\pi(\Delta))^{p}+\bar{c} \alpha_3^{p}|\bar{x}_{\Delta}(t^-)|^{p}\Delta \Big)\\
&\leq 3^{p-1}(1\vee c(p)\sigma^{p}\vee \bar{c}\alpha_3^{p})\Big(\Delta^{p/2}(\pi(\Delta))^{p}+|\bar{x}_{\Delta}(t)|^{p}\Delta \Big)\\
&\leq \mathfrak{D}_1\Big(\Delta^{p/2}(\pi(\Delta))^{p}+\Delta \Big)|\bar{x}_{\Delta}(t)|^{p},
\end{align*}
which is \eqref{eq:31}, where $\mathfrak{D}_1=3^{p-1}[(1\vee c(p)\sigma^{p}) \vee \bar{c}\alpha_3^{p}]$. For $p\in (0,2)$, the Jensen inequality yields
\begin{align*}
\mathbb{E}\Big(|x_{\Delta}(t)-\bar{x}_{\Delta}(t)|^{p}\big| \mathcal{F}_{k(t)}\Big)
&\le \Big\{ \mathbb{E}\Big(|x_{\Delta}(t)-\bar{x}_{\Delta}(t)|^{2}\big| \mathcal{F}_{k(t)}\Big)\Big\}^{p/2}\\
&\le \Big\{\mathfrak{D}_1\Big(\Delta(\pi(\Delta))^{2}+\Delta \Big)|\bar{x}_{\Delta}(t)|^{p}\Big\}^{p/2}\\
&\le 2^{p/2-1}\mathfrak{D}_1^{p/2}\Big(\Delta^{p/2}(\pi(\Delta))^{p}+\Delta^{p/2} \Big)(|\bar{x}_{\Delta}(t)|^{p})^{p/2}\\
&\le \mathfrak{D}_2\Big(\Delta^{p/2}(\pi(\Delta))^{p}\Big)|\bar{x}_{\Delta}(t)|^{p},
\end{align*}
which is the required assertion in \eqref{eq:31*}, where $\mathfrak{D}_2=2^{p/2}\mathfrak{D}_1^{p/2}$. The proof is thus complete.
\end{proof}
 We can now upper bound the moment of the truncated EM solution as follows.
\begin{lemma}\label{eq:l6}
Let Assumptions \ref{eq:a1} and \ref{eq:a2} hold. Then for any $ p\ge 3$
\begin{equation}\label{eq:32}
  \sup_{0\leq \Delta \leq \Delta^*} \sup_{0\leq t\leq T}(\mathbb{E}|x_{\Delta}(t)|^p)\leq \rho_3, \quad \forall T> 0,
\end{equation}
where $\rho_3:=\rho_3(T, p, K, \xi)$ and may change between occurrences.
\end{lemma}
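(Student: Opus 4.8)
The plan is to apply the jump-diffusion It\^{o} formula (\ref{eq:7}) to $Z(x)=|x|^p$ along the continuous-time process $x_{\Delta}(t)$ of (\ref{eq:30}) (note $x_{\Delta}$ need not stay positive after truncation, so $|x|^p$ rather than $x^p$ is used), take full expectations to kill the Brownian and compensated-Poisson martingale integrals, and assemble a Gronwall-type inequality for $\mathbb{E}|x_{\Delta}(t)|^p$. After applying $L$, the drift collects $p|x_{\Delta}(s)|^{p-2}\big(x_{\Delta}(s)f_{\Delta}(\bar{x}_{\Delta}(s))+\tfrac{p-1}{2}|\varphi(\bar{x}_{\Delta}(s-\tau))g_{\Delta}(\bar{x}_{\Delta}(s))|^2\big)$ together with the compensated jump term $\lambda\big(|x_{\Delta}(s)+h(\bar{x}_{\Delta}(s))|^p-|x_{\Delta}(s)|^p\big)$. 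The organising difficulty of the whole argument is the argument mismatch: the coefficients sit at the step process $\bar{x}_{\Delta}$ while the powers generated by It\^{o}'s formula sit at the continuous process $x_{\Delta}$.

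I would resolve this by writing $x_{\Delta}=\bar{x}_{\Delta}+(x_{\Delta}-\bar{x}_{\Delta})$ so that
\[
x_{\Delta}f_{\Delta}(\bar{x}_{\Delta})+\tfrac{p-1}{2}|\varphi g_{\Delta}(\bar{x}_{\Delta})|^2=\Big(\bar{x}_{\Delta}f_{\Delta}(\bar{x}_{\Delta})+\tfrac{p-1}{2}|\varphi g_{\Delta}(\bar{x}_{\Delta})|^2\Big)+(x_{\Delta}-\bar{x}_{\Delta})f_{\Delta}(\bar{x}_{\Delta}),
\]
where the bracketed matched-argument term is $\le K_5(1+|\bar{x}_{\Delta}|^2)$ by Lemma \ref{eq:l4}. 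Multiplying by $p|x_{\Delta}|^{p-2}$ and using Young's inequality turns $K_5|x_{\Delta}|^{p-2}(1+|\bar{x}_{\Delta}|^2)$ into a bound $C(1+|x_{\Delta}|^p+|\bar{x}_{\Delta}|^p)$. For the residual $p|x_{\Delta}|^{p-2}(x_{\Delta}-\bar{x}_{\Delta})f_{\Delta}(\bar{x}_{\Delta})$ I would invoke the truncation bound $|f_{\Delta}|\le\pi(\Delta)$ from (\ref{eq:25}), split $|x_{\Delta}|^{p-2}\le 2^{p-3}(|\bar{x}_{\Delta}|^{p-2}+|x_{\Delta}-\bar{x}_{\Delta}|^{p-2})$ (here $p\ge 3$, i.e. $p-2\ge 1$, is precisely what is needed), then take conditional expectations given $\mathcal{F}_{k(s)}$ and apply Lemma \ref{eq:l5} with exponents $1$ and $p-1$ to replace the increments $|x_{\Delta}-\bar{x}_{\Delta}|$ by powers of the $\mathcal{F}_{k(s)}$-measurable $\bar{x}_{\Delta}$.

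The crucial point making these residuals harmless is the smallness built into (\ref{eq:24}): the $\pi(\Delta)$ from $f_{\Delta}$ combines with the factors $\Delta^{1/2}\pi(\Delta)$, $\Delta^{(p-1)/2}\pi(\Delta)^{p-1}$ and $\Delta$ supplied by Lemma \ref{eq:l5}, and since $\Delta^{1/2}\pi(\Delta)^2=(\Delta^{1/4}\pi(\Delta))^2\le 1$ and $\pi(\Delta)\le\Delta^{-1/4}$ give $\Delta^{(p-1)/2}\pi(\Delta)^p\le\Delta^{(p-2)/4}\le 1$, every resulting product is bounded uniformly in $\Delta$. Thus the entire residual is dominated by $C(1+|\bar{x}_{\Delta}|^p)$ with $C$ independent of $\Delta$. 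The compensated jump term is treated the same way: since $h(\bar{x}_{\Delta})=\alpha_3\bar{x}_{\Delta}$ is linear, expanding $|x_{\Delta}+\alpha_3\bar{x}_{\Delta}|^p-|x_{\Delta}|^p$ and using $(a+b)^p\le 2^{p-1}(a^p+b^p)$ bounds it by $C(1+|x_{\Delta}|^p+|\bar{x}_{\Delta}|^p)$.

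Finally I would take full expectations and use $\mathbb{E}|\bar{x}_{\Delta}(s)|^p=\mathbb{E}|x_{\Delta}(k(s))|^p\le\sup_{0\le u\le s}\mathbb{E}|x_{\Delta}(u)|^p$ to phrase everything in terms of $x_{\Delta}$ alone, obtaining
\[
\sup_{0\le u\le t}\mathbb{E}|x_{\Delta}(u)|^p\le|\xi(0)|^p+C\int_0^t\Big(1+\sup_{0\le u\le s}\mathbb{E}|x_{\Delta}(u)|^p\Big)ds,
\]
and conclude by Gronwall's inequality that $\sup_{0\le t\le T}\mathbb{E}|x_{\Delta}(t)|^p\le(|\xi(0)|^p+CT)e^{CT}=:\rho_3$, a constant depending on $T,p,\xi$ but not on $\Delta$. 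The main obstacle, as flagged, is the systematic bookkeeping of the mismatch error terms and the verification of their uniform-in-$\Delta$ smallness through (\ref{eq:24}); once Lemmas \ref{eq:l4} and \ref{eq:l5} are in hand, the remaining estimates are routine.
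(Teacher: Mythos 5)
Your proposal follows essentially the same route as the paper's proof: the decomposition $x_{\Delta}f_{\Delta}(\bar{x}_{\Delta})=\bar{x}_{\Delta}f_{\Delta}(\bar{x}_{\Delta})+(x_{\Delta}-\bar{x}_{\Delta})f_{\Delta}(\bar{x}_{\Delta})$ combined with Lemma \ref{eq:l4}, the splitting $|x_{\Delta}|^{p-2}\le 2^{p-3}(|\bar{x}_{\Delta}|^{p-2}+|x_{\Delta}-\bar{x}_{\Delta}|^{p-2})$ for $p\ge 3$, the conditional application of Lemma \ref{eq:l5} with exponents $1$ and $p-1$, the uniform-in-$\Delta$ control via $\Delta^{1/4}\pi(\Delta)\le 1$, the elementary bound on the jump term, and the Gronwall conclusion are all exactly the steps the paper takes. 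The argument is correct as outlined.
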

\begin{proof}
Fix any $\Delta \in (0,\Delta^*)$ and $T\geq 0$. For $t\in[0,T]$, we obtain from  \eqref{eq:4}, \eqref{eq:23} and Lemma \ref{eq:l4} 
\begin{align*}
\mathbb{E}|x_{\Delta}(t)|^p-|\xi(0)|^p &\leq\mathbb{E}\int_{0}^{t}p|x_{\Delta}(s^-)|^{p-2}\Big(\bar{x}_{\Delta}(s^-)f_{\Delta}(\bar{x}_{\Delta}(s^-))+\frac{p-1}{2}|\varphi(\bar{x}_{\Delta}((s-\tau)^-))g_{\Delta}(\bar{x}_{\Delta}(s^-))|^2\Big)ds\\
&+\mathbb{E}\int_{0}^{t}p|x_{\Delta}(s^-)|^{p-2}(x_{\Delta}(s^-)-\bar{x}_{\Delta}(s^-))f_{\Delta}(\bar{x}_{\Delta}(s^-))ds\\
&+\lambda\mathbb{E}\Big(\int_{0}^{t}|x_{\Delta}(s^-)+h(\bar{x}_{\Delta}(s^-))|^p-|x_{\Delta}(s^-)|^p\Big)ds\\
&\leq H_{11}+H_{12}+H_{13},
\end{align*}
where
\begin{align*}
H_{11}&=\mathbb{E}\int_{0}^{t}K_5p|x_{\Delta}(s^-)|^{p-2}(1+|\bar{x}_{\Delta}(s^-)|^2)ds\\
H_{12}&=\mathbb{E}\int_{0}^{t}p|x_{\Delta}(s^-)|^{p-2}\Big(x_{\Delta}(s^-)-\bar{x}_{\Delta}(s^-)\Big)f_{\Delta}(\bar{x}_{\Delta}(s^-))ds\\
H_{13}&=\lambda\mathbb{E}\Big(\int_{0}^{t}|x_{\Delta}(s^-)+h(\bar{x}_{\Delta}(s^-))|^p-|x_{\Delta}(s^-)|^p\Big)ds.
\end{align*}
Applying the Young inequality, we obtain
\begin{align*}
H_{11}&= K_5p\mathbb{E}\int_{0}^{t}|x_{\Delta}(s^-)|^{p-2}(1+|\bar{x}_{\Delta}(s^-)|^2)ds\\
   &\leq K_5p\int_{0}^{t}\Big(\frac{(p-2)}{p}\mathbb{E}|x_{\Delta}(s^-)|^{p}+\frac{2}{p}\mathbb{E}(1+|\bar{x}_{\Delta}(s^-)|)^{p}\Big)ds\\
 &\leq K_5\int_{0}^{t}\Big((p-2)\mathbb{E}|x_{\Delta}(s^-)|^{p}+2^p(1 +\mathbb{E}|\bar{x}_{\Delta}(s^-)|^{p})\Big)ds\\
&\leq c_1\int_{0}^{t}(1+\mathbb{E}|x_{\Delta}(s)|^{p}+\mathbb{E}|\bar{x}_{\Delta}(s)|^{p})ds,
\end{align*}
where $c_1=K_5[(p-2)\vee 2^p]$. For $s\in[0,t]$, we note from the triangle inequality 
\begin{equation*}
|x_{\Delta}(s^-)|\le |x_{\Delta}(s^-)-\bar{x}_{\Delta}(s^-)|+|\bar{x}_{\Delta}(s^-)|.
\end{equation*}
This implies for $p\ge 3$, we obtain
\begin{align*}
H_{12}&\le p\mathbb{E}\int_{0}^{t}\Big(|x_{\Delta}(s^-)-\bar{x}_{\Delta}(s^-)|+|\bar{x}_{\Delta}(s^-)|\Big)^{p-2}|x_{\Delta}(s^-)-\bar{x}_{\Delta}(s^-)||f_{\Delta}(\bar{x}_{\Delta}(s^-))|ds\\
&\le 2^{(p-3)}p\mathbb{E}\int_{0}^{t}\Big(|x_{\Delta}(s^-)-\bar{x}_{\Delta}(s^-)|^{p-2}+|\bar{x}_{\Delta}(s^-)|^{p-2}\Big)|x_{\Delta}(s^-)-\bar{x}_{\Delta}(s^-)||f_{\Delta}(\bar{x}_{\Delta}(s^-))|ds\\
&= H_{121}+H_{122},
\end{align*}
where 
\begin{align*}
H_{121}&= 2^{(p-3)}p\mathbb{E}\int_{0}^{t}|\bar{x}_{\Delta}(s^-)|^{p-2}|x_{\Delta}(s^-)-\bar{x}_{\Delta}(s)||f_{\Delta}(\bar{x}_{\Delta}(s^-))|ds\\
H_{122}&= 2^{(p-3)}p\mathbb{E}\int_{0}^{t}|x_{\Delta}(s^-)-\bar{x}_{\Delta}(s^-)|^{p-1}|f_{\Delta}(\bar{x}_{\Delta}(s^-))|ds.
\end{align*}
By Lemma \ref{eq:l5} and \eqref{eq:25}, we now have
\begin{align}\label{eq:h0}
H_{121}&\le 2^{(p-3)}p\int_{0}^{t}\mathbb{E}\Big\{|\bar{x}_{\Delta}(s)|^{p-2}|f_{\Delta}(\bar{x}_{\Delta}(s))|\mathbb{E}\Big( |x_{\Delta}(s)-\bar{x}_{\Delta}(s)|\mathcal{F}_{k(s)})\Big)\Big\}ds\nonumber \\
&\le 2^{(p-3)}p\mathfrak{D}_2(\pi(\Delta))\Delta^{1/2}(\pi(\Delta))\int_{0}^{t}\mathbb{E}\Big\{|\bar{x}_{\Delta}(s)|(|\bar{x}_{\Delta}(s)|^{p-2})\Big\}ds\nonumber \\
&\le 2^{(p-3)}p\mathfrak{D}_2(\pi(\Delta))\Delta^{1/2}(\pi(\Delta))\int_{0}^{t}\mathbb{E}|\bar{x}_{\Delta}(s)|^{p-1}ds\nonumber \\
&\le 2^{(p-3)}p\mathfrak{D}_2(\pi(\Delta))^2\Delta^{1/2}\int_{0}^{t}\Big(\frac{1}{p}+\frac{(p-1)}{p}\mathbb{E}|\bar{x}_{\Delta}(s)|^{p}\Big)ds\nonumber\\
&\le c_2+c_3\int_{0}^{t}\mathbb{E}|\bar{x}_{\Delta}(s)|^{p}ds,
\end{align}
where $c_2=2^{(p-3)}\mathfrak{D}_2T$ and $c_3=2^{(p-3)}\mathfrak{D}_2(p-1)$ , noting that $(\pi(\Delta))\Delta^{1/4}\le 1$ and hence $$[(\pi(\Delta))\Delta^{1/4}]^2\le 1.$$ Also by \eqref{eq:25},  we have 
\begin{align}\label{eq:h1}
H_{122}&\le 2^{(p-3)}p\pi(\Delta) \int_{0}^{t}\mathbb{E}|x_{\Delta}(s)-\bar{x}_{\Delta}(s)|^{p-1}ds.
\end{align}
Do note for $p\ge 3$ and $\bar{w}\in(0,1/4]$, we have $p\bar{w}\le (p-1)/2$ and then 
\begin{equation}\label{eq:h2}
\Delta^{(p-1)/2-\bar{w}p}\le 1.
\end{equation}
So for $p\ge 3$ and $\bar{w}=1/4$, we obtain from \eqref{eq:h1}, Lemma \ref{eq:l5}, \eqref{eq:h2} and the Young's inequality
\begin{align*}
H_{122}&\le 2^{(p-3)}p\mathfrak{D}_1 \Big(\Delta^{(p-1)/2}(\pi(\Delta))^{p-1}(\pi(\Delta))+\Delta(\pi(\Delta))\Big)\int_{0}^{t}\mathbb{E}|\bar{x}_{\Delta}(s)|^{p-1}ds\\
&\le  2^{(p-3)}p\mathfrak{D}_1\Big(\Delta^{(p-1)/2}(\pi(\Delta))^{p}+\Delta(\pi(\Delta))\Big)\int_{0}^{t}\mathbb{E}|\bar{x}_{\Delta}(s)|^{p-1}ds\\
&\le 2^{(p-3)}p\mathfrak{D}_1\Big(\Delta^{(p-2)/4}+\Delta(\pi(\Delta))\Big)\int_{0}^{t}\mathbb{E}|\bar{x}_{\Delta}(s)|^{p-1}ds\\
&\le 2^{(p-2)}p\mathfrak{D}_1\int_{0}^{t}\Big(\frac{1}{p}+\frac{(p-1)}{p}\mathbb{E}|\bar{x}_{\Delta}(s)|^{p}\Big)ds\\
&\le c_4+ c_5\int_{0}^{t}\mathbb{E}|\bar{x}_{\Delta}(s)|^{p}ds,
\end{align*}
where $c_4=2^{(p-2)}\mathfrak{D}_1T$ and $c_5=2^{(p-2)}\mathfrak{D}_1(p-1)$. We now combine $H_{121}$ and $H_{122}$ to have
\begin{align*}
H_{12}&\le c_2+c_4+(c_3 +c_5)\int_{0}^{t}\mathbb{E}|\bar{x}_{\Delta}(s)|^{p}ds\\
&\le c_6+c_7\int_{0}^{t}\mathbb{E}|\bar{x}_{\Delta}(s)|^{p}ds,
\end{align*}
where $c_6=c_2+c_4$ and $c_7=c_3+c_5$.  Also we estimate $H_{13}$ as
\begin{align*}
H_{13}&=\lambda\mathbb{E}\Big(\int_{0}^{t}|x_{\Delta}(s^-)+h(\bar{x}_{\Delta}(s^-))|^p-|x_{\Delta}(s^-)|^p\Big)ds\\
   &\leq \lambda\mathbb{E}\Big(\int_{0}^{t}2^{p-1}|x_{\Delta}(s^-)|^p+2^{p-1}|h(\bar{x}_{\Delta}(s^-))|^p-|x_{\Delta}(s^-)|^p\Big)ds\\
   &\leq \lambda\mathbb{E}\Big(\int_{0}^{t}(2^{p-1}-1)|x_{\Delta}(s^-)|^p+2^{p-1}\alpha_3^p|\bar{x}_{\Delta}(s^-)|^p\Big)ds\\
   &\leq c_8\int_{0}^{t}(\mathbb{E}|x_{\Delta}(s)|^p+\mathbb{E}|\bar{x}_{\Delta}(s)|^p)ds,
\end{align*}
where $c_8=\lambda[(2^{p-1}-1)\vee 2^{p-1}\alpha_3^p]$. Combining $H_{11}$, $H_{12}$ and $H_{13}$, we have
\begin{align*}
\mathbb{E}|x_{\Delta}(t)|^p &\le |\xi(0)|^p+(c_1T+c_6)+\int_{0}^{t}\Big((c_1+c_8)\mathbb{E}|x_{\Delta}(s)|^{p}+(c_1+c_7+c_8)\mathbb{E}|\bar{x}_{\Delta}(s)|^{p}\Big)ds\\
&\leq c_9+2c_{10}\int_{0}^{t}\sup_{0\leq u \leq s}\Big(\mathbb{E}|x_{\Delta}(u)|^p \Big)ds,
\end{align*}
where $c_9=|\xi(0)|^p+c_1T+c_6$ and $c_{10}=(c_1+c_8)\vee(c_1+c_7+c_8)$. As this holds for any $t\in [0,T]$, we then have
\begin{equation*}
  \sup_{0\leq u \leq t}(\mathbb{E}|x_{\Delta}(u)|^p)\leq  c_9+2c_{10}\int_{0}^{t}\sup_{0\leq u \leq s}\Big(\mathbb{E}|x_{\Delta}(u)|^p \Big)ds.
\end{equation*}
The Gronwall inequality yields
\begin{equation*}
  \sup_{0\leq u \leq T}(\mathbb{E}|x_{\Delta}(u)|^p)\leq \rho_3
\end{equation*}
as the required assertion, where $\rho_3=c_9e^{2c_{10}T}$ is independent of $\Delta$.
\end{proof}
\subsection{\textbf{Finite time strong convergence}}
\noindent We can now establish finite time strong convergence theory for the truncated EM solutions to \textup{SDDE}  \eqref{eq:3}. Before that, let first establish the following useful lemma.
\begin{lemma}\label{eq:l7}
Suppose Assumptions \ref{eq:a1}, \ref{eq:a2} and \ref{eq:a3}  hold and fix $T>0$. Then for any $\epsilon\in (0,1)$, there exists a pair $n=n(\epsilon)>0$ and $\bar{\Delta}=\bar{\Delta}(\epsilon)>0$ such that 
\begin{equation}\label{eq:33}
  \mathbb{P}(\vartheta_{\Delta,n}\leq T)\leq \epsilon
\end{equation}
as long as $\Delta \in (0, \bar \Delta]$, where
\begin{equation}\label{eq:33a}
 \vartheta_{\Delta,n}=\inf\{t\in [0,T]:x_{\Delta}(t)\notin (1/n,n)\}
\end{equation}
is a stopping time. 
\end{lemma}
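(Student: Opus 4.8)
The plan is to run a Lyapunov-function argument in the spirit of the proof of Theorem~\ref{eq:thrm1}, but along the continuous-time truncated process $x_\Delta$ of \eqref{eq:30}, and then to convert a uniform-in-$n$ bound on the Lyapunov functional into the exit estimate \eqref{eq:33} by a Chebyshev inequality. Concretely, I would reuse the same $C^2$ function $Z(x)=x^\beta-1-\beta\log x$, $\beta\in(0,1)$, from \eqref{eq:13}; recall it is decreasing on $(0,1)$, increasing on $(1,\infty)$, and blows up both as $x\to0^+$ and as $x\to\infty$, so that $Z(x)\ge Z(1/n)\wedge Z(n)$ whenever $x\notin(1/n,n)$.

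First I would apply the It\^o formula \eqref{eq:7} to $Z(x_\Delta(t))$ and stop at $\vartheta_{\Delta,n}\wedge t$. The Brownian and compensated-Poisson integrals are martingales, so taking expectations leaves
\[
\mathbb{E}\big[Z(x_\Delta(\vartheta_{\Delta,n}\wedge t))\big]=Z(\xi(0))+\mathbb{E}\int_0^{\vartheta_{\Delta,n}\wedge t} LZ\big(x_\Delta(s^-),\bar x_\Delta(s^-),\bar x_\Delta((s-\tau)^-)\big)\,ds,
\]
where the coefficients $f_\Delta,g_\Delta,h$ are evaluated at the step process $\bar x_\Delta$ while the derivatives $Z_x,Z_{xx}$ sit at $x_\Delta$. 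The useful observation is that for $s<\vartheta_{\Delta,n}$ one has $k(s)<\vartheta_{\Delta,n}$, hence $\bar x_\Delta(s)=x_\Delta(k(s))\in(1/n,n)$ as well, so both arguments stay in $(1/n,n)$ on the stopped interval. Moreover, shrinking $\Delta$ so that $\mu^{-1}(\pi(\Delta))\ge n$ forces the truncation to be inactive there, i.e. $f_\Delta=f$ and $g_\Delta=g$ on $(1/n,n)$, whence the matched operator $LZ$ evaluated entirely at $\bar x_\Delta$ coincides with the true one and is bounded above by the constant $K_0$ of \eqref{eq:14}, independently of $n$ and $\Delta$.

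Next I would control the mismatch between evaluating $Z_x,Z_{xx}$ at $x_\Delta$ rather than at $\bar x_\Delta$. Writing $LZ$ as this matched operator plus a correction, the correction is, by the mean-value theorem, bounded by $|x_\Delta-\bar x_\Delta|$ times suprema of $|Z_{xx}|,|Z_{xxx}|$ over $(1/n,n)$ (finite for each fixed $n$) times the coefficients, which on $(1/n,n)$ are dominated by a polynomial in $n$; the jump part of the correction is handled the same way, noting that $x_\Delta+\alpha_3\bar x_\Delta\ge\alpha_3/n$ keeps the argument away from the origin. Taking conditional expectations and invoking Lemma~\ref{eq:l5} (with $p=1$ in \eqref{eq:31*}) gives $\mathbb{E}\,|x_\Delta-\bar x_\Delta|\le \mathfrak{D}_2\,\Delta^{1/2}\pi(\Delta)\,\mathbb{E}|\bar x_\Delta|$, and since $\Delta^{1/4}\pi(\Delta)\le1$ by \eqref{eq:24} we have $\Delta^{1/2}\pi(\Delta)\le\Delta^{1/4}\to0$. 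Hence for each fixed $n$ the whole correction is at most $(\text{poly}(n))\,\Delta^{1/4}T$, which can be made $\le1$ by taking $\Delta\le\bar\Delta(n)$. This yields
\[
\mathbb{E}\big[Z(x_\Delta(\vartheta_{\Delta,n}\wedge T))\big]\le Z(\xi(0))+K_0T+1=:C,
\]
with $C$ independent of both $n$ and $\Delta\in(0,\bar\Delta]$. By the blow-up of $Z$ at the boundary and Chebyshev, $\mathbb{P}(\vartheta_{\Delta,n}\le T)\le C/(Z(1/n)\wedge Z(n))$; so I would first pick $n=n(\epsilon)$ making the right-hand side $\le\epsilon$, and then the attendant $\bar\Delta=\bar\Delta(n(\epsilon))=\bar\Delta(\epsilon)$, giving \eqref{eq:33}.

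The hard part will be the generator/mismatch estimate: because $Z$ and its derivatives explode as $x\to0^+$, the correction carries negative powers of $x_\Delta$ whose suprema over $(1/n,n)$ grow with $n$, so the per-$n$ bound is genuinely $n$-dependent and is only tamed by the rate $\Delta^{1/2}\pi(\Delta)$ supplied by Lemma~\ref{eq:l5} together with $\Delta^{1/4}\pi(\Delta)\le1$. This is precisely what dictates the order of the quantifiers — $n$ chosen first from $\epsilon$ and the $n$-free constant $C$, then $\bar\Delta$ chosen from $n$ — and also why $\Delta$ must be small enough for the truncation to be inactive on $(1/n,n)$; if one prefers constants that are cleaner in $n$, the uniform moment bound of Lemma~\ref{eq:l6} can replace the crude polynomial-in-$n$ coefficient bounds used above.
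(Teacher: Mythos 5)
Your proposal is correct and shares the paper's overall architecture (the Lyapunov function $Z(x)=x^{\beta}-1-\beta\log x$, the It\^{o} formula stopped at $\vartheta_{\Delta,n}\wedge t$, a generator bound plus a mismatch correction controlled by Lemma \ref{eq:l5} and the constraint $\Delta^{1/4}\pi(\Delta)\le 1$, and finally a Chebyshev estimate with $n$ chosen before $\bar\Delta$), but the decomposition of the generator is genuinely different. The paper matches the operator at the continuous process $x_{\Delta}$, so its correction terms $H_{21},H_{22},H_{23}$ involve differences of the \emph{coefficients} $f_{\Delta},g_{\Delta},\varphi,h$ at $\bar x_{\Delta}$ versus $x_{\Delta}$; this forces it to invoke the local Lipschitz bounds of Lemma \ref{eq:l2} and Assumption \ref{eq:a5}, and --- because the diffusion carries the delayed argument --- to treat the segment $s-\tau\in[-\tau,0)$ separately via the H\"older continuity of $\xi$ in Assumption \ref{eq:a3}, contributing the $K_3K_7T\Delta^{\gamma}$ term. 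You instead match at the step process $\bar x_{\Delta}$ (after shrinking $\Delta$ so that $\mu^{-1}(\pi(\Delta))\ge n$ de-truncates $f_{\Delta},g_{\Delta}$ on $(1/n,n)$, whereas the paper keeps the truncated coefficients and asserts the bound $K_6$ for them directly), so your correction only involves differences of $Z_x,Z_{xx}$ and of $Z$ across the jump, controlled by $\sup_{[1/n,n]}(|Z_{xx}|\vee|Z_{xxx}|)$ times the coefficient bounds and $|x_{\Delta}-\bar x_{\Delta}|$. Since the bound \eqref{eq:14} is uniform in the delayed argument $y$, your matching sidesteps the delayed mismatch entirely, so your proof needs neither Assumption \ref{eq:a5} nor (for this lemma) Assumption \ref{eq:a3} --- a mild strengthening --- at the cost of the extra constraint tying $\bar\Delta$ to $n$, which is harmless given the order of quantifiers. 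Both routes land on the same Chebyshev conclusion, and your identification of the $n$-dependent constants tamed by the rate $\Delta^{1/2}\pi(\Delta)\le\Delta^{1/4}$ as the crux matches the paper's \eqref{eq:35}--\eqref{eq:36}.
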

\begin{proof}
Let $Z(\cdot)$ be the Lyapunov function in \eqref{eq:13}. Then for $t\in [0,T]$, the It\^{o} formula gives us
\begin{align*}
&\mathbb{E}(Z(x_{\Delta}(t\wedge \vartheta_{\Delta,n}))-Z(\xi(0)))\\
&=\mathbb{E}\int_{0}^{t\wedge\vartheta_{\Delta,n}}\Big[Z_x(x_{\Delta}(s^-))f_{\Delta}(\bar{x}_{\Delta}(s^-))
+\frac{1}{2}Z_{xx}(x_{\Delta}(s^-))\varphi(\bar{x}_{\Delta}((s-\tau)^-))^2g_{\Delta}(\bar{x}_{\Delta}(s^-))^2\\
&+\lambda(Z(x_{\Delta}(s^-)+h(\bar{x}_{\Delta}(s^-)))-Z(x_{\Delta}(s^-)))\Big]ds.
\end{align*}
By expansion, we obtain
\begin{align*}
&\mathbb{E}(Z(x_{\Delta}(t\wedge \vartheta_{\Delta,n}))-Z(\xi(0)))\\
&\leq \mathbb{E}\int_{0}^{t\wedge\vartheta_{\Delta,n}} \Big[\Big(Z_x(x_{\Delta}(s^-))f_{\Delta}(x_{\Delta}(s^-))+\frac{1}{2}Z_{xx}(x_{\Delta}(s^-))\varphi(x_{\Delta}((s-\tau)^-))^2g_{\Delta}(x_{\Delta}(s^-))^2\\
&+\lambda(Z(x_{\Delta}(s^-)+h(x_{\Delta}(s^-)))-Z(x_{\Delta}(s^-)))\Big)+ Z_x(x_{\Delta}(s^-))\Big(f_{\Delta}(\bar{x}_{\Delta}(s^-))-f_{\Delta}(x_{\Delta}(s^-))\Big)\\
&+\frac{1}{2}Z_{xx}(x_{\Delta}(s^-))\Big(\varphi(\bar{x}_{\Delta}((s-\tau)^-))^2g_{\Delta}(\bar{x}_{\Delta}(s^-))^2-\varphi(x_{\Delta}((s-\tau)^-))^2g_{\Delta}(x_{\Delta}(s^-))^2\Big)\\
&+\lambda\Big(Z(x_{\Delta}(s^-)+h(\bar{x}_{\Delta}(s^-)))-Z(x_{\Delta}(s^-)+h(x_{\Delta}(s^-)))\Big)\Big]ds\\
&\leq  \mathbb{E}\int_{0}^{t\wedge\vartheta_{\Delta,n}} L(x_{\Delta}(s^-),x_{\Delta}((s-\tau)^-))ds + H_{21}+ H_{22}+ H_{23}
\end{align*}
 Here,
\begin{equation*}
 L(x_{\Delta}(s^-),x_{\Delta}((s-\tau)^-))\leq  \ell(x_{\Delta}(s^-),x_{\Delta}((s-\tau)^-))+\lambda(Z(x_{\Delta}(s^-)+h(x_{\Delta}(s^-)))-Z(x_{\Delta}(s^-)))
\end{equation*}
is the operator \eqref{eq:4} which is independent of $t$ with
\begin{equation*}
 \ell(x_{\Delta}(s^-),x_{\Delta}((s-\tau)^-))=Z_x(x_{\Delta}(s^-))f_{\Delta}(x_{\Delta}(s^-))+\frac{1}{2}Z_{xx}(x_{\Delta}(s^-))\varphi(x_{\Delta}((s-\tau)^-))^2g_{\Delta}(x_{\Delta}(s^-))^2,
\end{equation*}
and
\begin{align*}
H_{21}&=\mathbb{E}\int_{0}^{t\wedge\vartheta_{\Delta,n}} Z_x(x_{\Delta}(s^-))\Big(f_{\Delta}(\bar{x}_{\Delta}(s^-))-f_{\Delta}(x_{\Delta}(s^-))\Big)ds\\
H_{22}&=\frac{1}{2}\mathbb{E}\int_{0}^{t\wedge\vartheta_{\Delta,n}} Z_{xx}(x_{\Delta}(s^-))\Big(\varphi(\bar{x}_{\Delta}((s-\tau)^-))^2g_{\Delta}(\bar{x}_{\Delta}(s^-))^2-\varphi(x_{\Delta}((s-\tau)^-))^2g_{\Delta}(x_{\Delta}(s^-))^2\Big)ds\\
H_{23}&=\lambda\mathbb{E}\int_{0}^{t\wedge\vartheta_{\Delta,n}} \Big(Z(x_{\Delta}(s^-)+h(\bar{x}_{\Delta}(s^-)))-Z(x_{\Delta}(s^-)+h(x_{\Delta}(s^-)))\Big)ds.
\end{align*}
By Assumption \ref{eq:a2}, there exists a constant $K_6>0$ such that for $s\in [0,t\wedge \vartheta_{\Delta,n}]$
\begin{equation*}
L(x_{\Delta}(s^-),x_{\Delta}((s-\tau)^-))\leq K_6.
\end{equation*}
Also by Lemma \ref{eq:l2}, we have
\begin{equation*}
H_{21}\leq K_n\mathbb{E}\int_{0}^{t\wedge\vartheta_{\Delta,n}} Z_x(x_{\Delta}(s^-))|\bar{x}_{\Delta}(s^-)-x_{\Delta}(s^-)|ds.
\end{equation*}
Meanwhile, for $x_{\Delta}(s^-), \bar{x}_{\Delta}(s^-)\in [1/n,n]$, we derive that
\begin{align*}
H_{22}&= \frac{1}{2}\mathbb{E}\int_{0}^{t\wedge\vartheta_{\Delta,n}} Z_{xx}\Big(g_{\Delta}(x_{\Delta}(s^-))^2|\varphi(\bar{x}_{\Delta}((s-\tau)^-))^2- \varphi(x_{\Delta}((s-\tau)^-))^2|\\
&+\varphi(\bar{x}_{\Delta}((s-\tau)^-))^2|g_{\Delta}(\bar{x}_{\Delta}(s^-))^2-g_{\Delta}(x_{\Delta}(s^-))^2|\Big)ds\\
&\leq \mathbb{E}\int_{0}^{t\wedge\vartheta_{\Delta,n}} Z_{xx}\Big(x_{\Delta}(s^-))(\sigma^2\mu(n) K_n |\bar{x}_{\Delta}(s^-)-x_{\Delta}(s^-)|\\
&+\sigma (\mu(n))^2L_n|\bar{x}_{\Delta}((s-\tau)^-)-x_{\Delta}((s-\tau)^-)|\Big)ds,
\end{align*}
where \eqref{eq:8}, \eqref{eq:18} and \eqref{eq:21} have been used. Moreover, by the definition of \eqref{eq:13}, we have
\begin{align*}
H_{23}\leq \lambda\mathbb{E}&\int_{0}^{t\wedge\vartheta_{\Delta,n}}\Big( (x_{\Delta}(s^-)+h(\bar{x}_{\Delta}(s^-)))^{\beta}-1-\beta\log(x_{\Delta}(s^-)+h(\bar{x}_{\Delta}(s^-)))\\
&-(x_{\Delta}(s^-)+h(x_{\Delta}(s^-)))^{\beta}+1+\beta\log(x_{\Delta}(s^-)+h(x_{\Delta}(s^-)))\Big)ds\\
&\le H_{231}+H_{232},
\end{align*}
where
\begin{align*}
H_{231}&=\lambda\mathbb{E}\int_{0}^{t\wedge\vartheta_{\Delta,n}}|(x_{\Delta}(s^-)+\alpha_3\bar{x}_{\Delta}(s^-))^{\beta}-(x_{\Delta}(s^-)+\alpha_3x_{\Delta}(s^-))^{\beta}|ds
\end{align*}
and
\begin{align*}
H_{232}&=\lambda\beta\mathbb{E}\int_{0}^{t\wedge\vartheta_{\Delta,n}}|\log(x_{\Delta}(s^-)+\alpha_3\bar{x}_{\Delta}(s^-))-\log(x_{\Delta}(s^-)+\alpha_3x_{\Delta}(s^-))|ds.
\end{align*}
Applying the mean value theorem, we obtain 
\begin{align*}
H_{231}&\le n\lambda\mathbb{E}\int_{0}^{t\wedge\vartheta_{\Delta,n}}|x_{\Delta}(s^-)+\alpha_3\bar{x}_{\Delta}(s^-)-\alpha_3x_{\Delta}(s^-)-x_{\Delta}(s^-)|ds\\
&= n\lambda\alpha_3\mathbb{E}\int_{0}^{t\wedge\vartheta_{\Delta,n}}|\bar{x}_{\Delta}(s^-)-x_{\Delta}(s^-)|ds.
\end{align*}
Similarly, we also have
\begin{align*}
H_{232}&\le n\lambda\beta\mathbb{E}\int_{0}^{t\wedge\vartheta_{\Delta,n}}|x_{\Delta}(s^-)+\alpha_3\bar{x}_{\Delta}(s^-)-\alpha_3x_{\Delta}(s^-)-x_{\Delta}(s^-)|ds\\
&=n\lambda\alpha_3\beta\mathbb{E}\int_{0}^{t\wedge\vartheta_{\Delta,n}}|\bar{x}_{\Delta}(s^-)-x_{\Delta}(s^-)|ds.
\end{align*}
Substituting $H_{231}$ and $H_{232}$ back into $H_{23}$, we have
\begin{equation*}
H_{23}\leq n\lambda\alpha_3(1+\beta)\mathbb{E}\int_{0}^{t\wedge\vartheta_{\Delta,n}}|\bar{x}_{\Delta}(s^-)-x_{\Delta}(s^-)|ds.
\end{equation*}
We thus combine the $H_{21}$, $H_{22}$ and $H_{23}$ to have
\begin{align*}
\mathbb{E}(Z(x_{\Delta}(t\wedge \vartheta_{\Delta,n})))&\leq Z(\xi(0))+K_6T\\
&+\sigma (\mu(n))^2L_n\mathbb{E}\int_{0}^{t\wedge\vartheta_{\Delta,n}}Z_{xx}(x_{\Delta}(s^-))|\bar{x}_{\Delta}((s-\tau)^-)-x_{\Delta}((s-\tau)^-)|ds\\
& +K_n\mathbb{E}\int_{0}^{t\wedge\vartheta_{\Delta,n}} Z_x(x_{\Delta}(s^-))|\bar{x}_{\Delta}(s^-)-x_{\Delta}(s^-)|ds\\
&+\sigma^2\mu(n) K_n \mathbb{E}\int_{0}^{t\wedge\vartheta_{\Delta,n}}Z_{xx}(x_{\Delta}(s^-))|\bar{x}_{\Delta}(s^-)-x_{\Delta}(s^-)|ds\\
&+n\lambda\alpha_3(1+\beta)\mathbb{E}\int_{0}^{t\wedge\vartheta_{\Delta,n}}|\bar{x}_{\Delta}(s^-)-x_{\Delta}(s^-)|ds.
\end{align*}
Therefore
\begin{align*}
\mathbb{E}(Z(x_{\Delta}(t\wedge \vartheta_{\Delta,n})))&\leq Z(\xi(0))+ K_6T+ K_7\mathbb{E}\int_{0}^{t\wedge\vartheta_{\Delta,n}}|\bar{x}_{\Delta}(s-\tau)-x_{\Delta}(s-\tau)|ds\\
&+K_8\mathbb{E}\int_{0}^{t\wedge\vartheta_{\Delta,n}}|\bar{x}_{\Delta}(s)-x_{\Delta}(s)|ds\\
&\leq Z(\xi(0))+ K_6T+ K_7\mathbb{E}\int_{-\tau}^{0}|\xi([s/\Delta]\Delta)-\xi(s)|ds\\
&+(K_7+K_8)\int_{0}^T\mathbb{E}\Big(\mathbb{E}|\bar{x}_{\Delta}(s)-x_{\Delta}(s)|^p\Big|\mathcal{F}_{k(s)}\Big)^{1/p}ds
\end{align*}
where
\begin{align*}
K_7&=\max_{1/n\leq x \leq n}\{Z_{xx}(x)\sigma (\mu(n))^2L_n\}
\end{align*}
and
\begin{align*}
K_8&=\max_{1/n\leq x \leq n}\{Z_x(x)K_n+Z_{xx}(x)\sigma^2\mu(n) K_n+n\lambda\alpha_3(1+\beta)\}.
\end{align*}
By Lemma \ref{eq:l5} and \ref{eq:l6}, we now have
\begin{align*}
\mathbb{E}(Z(x_{\Delta}(t\wedge \vartheta_{\Delta,n})))&\leq Z(\xi(0))+ K_6T+ K_3K_7T\Delta^{\gamma}+(K_7+K_8)\mathfrak{D}_1^{1/p}\\
&\times \Big(\Delta^{p/2}(\pi(\Delta))^p+\Delta\Big)^{1/p}\int_{0}^{T}(\mathbb{E}|\bar{x}_{\Delta}(s)|^p)^{1/p}ds\\
&\leq Z(\xi(0))+ K_6T+ K_3K_7T\Delta^{\gamma}+(K_7+K_8)\mathfrak{D}_1^{1/p}\\
&\times \Big(\Delta^{p/2}(\pi(\Delta))^p+\Delta\Big)^{1/p}\int_{0}^{T}(\sup_{0\leq u \leq s}(\mathbb{E}|\bar{x}_{\Delta}(u)|^p))^{1/p}ds\\
&\leq Z(\xi(0))+ K_6T+\nu_1\Delta^{\gamma}+\nu_2(\Delta^{p/2}(\pi(\Delta))^p+\Delta)^{1/p}\rho_3^{1/p}T.
\end{align*}
where $\nu_1= K_3K_7T$ and $\nu_2=(K_7+K_8)\mathfrak{D}_1^{1/p}$. Hence,
\begin{equation}\label{eq:34}
\mathbb{P}(\vartheta_{\Delta,n}\leq T)\leq \frac{Z(\xi(0))+ K_6T+ \nu_1\Delta^{\gamma}+\nu_2(\Delta^{p/2}(\pi(\Delta))^p+\Delta)^{1/p}\rho_3^{1/p} T}{Z(1/n)\wedge Z(n)}.
\end{equation}
For any $\epsilon\in(0,1)$, we may select sufficiently large $n$ such that
\begin{equation}\label{eq:35}
\frac{Z(\xi(0))+ K_6T}{Z(1/n)\wedge Z(n)}\leq \frac{\epsilon}{2}
\end{equation}
and sufficiently small of each step size $\Delta\in (0,\bar{\Delta}]$ such that
\begin{equation}\label{eq:36}
\frac{\nu_1\Delta^{\gamma}+\nu_2(\Delta^{p/2}(\pi(\Delta))^p+\Delta)^{1/p}\rho_3^{1/p}T}{Z(1/n)\wedge Z(n)}\leq \frac{\epsilon}{2}.
\end{equation}
Therefore, we obtain \eqref{eq:33} by combining \eqref{eq:35} and \eqref{eq:36}.
\end{proof}
We can now reveal finite time strong convergence theory of the truncated EM scheme.
\begin{lemma}\label{eq:l8}
Let Assumptions \ref{eq:a1}, \ref{eq:a2}, \ref{eq:a3} and \ref{eq:a5} hold. Set
\begin{equation*}
 \varsigma_{\Delta,n}=\vartheta_{\Delta,n} \wedge \tau_n,
\end{equation*}
where $\vartheta_{\Delta,n}$ and $\tau_n$ are \eqref{eq:12} and \eqref{eq:33a} respectively. Then for any $p\geq 2$, $T> 0$, we have
\begin{equation}\label{eq:37}
\mathbb{E}\Big( \sup_{0\leq t \leq T}|x_{\Delta}(t \wedge \varsigma_{\Delta,n})-x(t \wedge \varsigma_{\Delta,n})|^p  \Big)\leq \mathcal{K}\Delta^{p(1/4\wedge \gamma \wedge 1/p )} 
\end{equation}
for any sufficiently large $n$ and any $\Delta\in (0,\Delta^*]$, where $\mathcal{K}$ is a constant independent of $\Delta$. Consequently, we have
\begin{equation}\label{eq:38}
\lim_{\Delta\rightarrow 0}\mathbb{E}\Big( \sup_{0\leq t \leq T}|x_{\Delta}(t \wedge \varsigma_{\Delta,n})-x(t \wedge \varsigma_{\Delta,n})|^p \Big)=0.
\end{equation}
\end{lemma}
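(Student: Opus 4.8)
The plan is to control the error process $e_\Delta(t):=x_\Delta(t)-x(t)$ only up to the stopped time $\varsigma_{\Delta,n}$, exploiting the fact that on $[0,\varsigma_{\Delta,n}]$ both $x(s)$ and $x_\Delta(s)$ remain in $(1/n,n)$. For all $\Delta$ small enough that $[1/n,n]\subseteq[1/\mu^{-1}(\pi(\Delta)),\mu^{-1}(\pi(\Delta))]$, the truncated coefficients $f_\Delta,g_\Delta$ coincide with $f,g$ on the range actually visited, so the local Lipschitz bounds of Lemma \ref{eq:l2} (constant $K_n$) and Assumption \ref{eq:a5} (constant $L_n$) apply, and $g$ is bounded there by $\mu(n)$. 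First I would subtract the integral form of the exact solution from \eqref{eq:29}, writing $e_\Delta(t\wedge\varsigma_{\Delta,n})$ as the sum of a Lebesgue (drift) integral, an It\^o (diffusion) integral and a jump integral. After taking $\sup_{0\le t\le T}$, raising to the power $p$ and using $|a+b+c|^p\le 3^{p-1}(|a|^p+|b|^p+|c|^p)$, I would estimate the drift integral by H\"older, the diffusion integral by the Burkholder--Davis--Gundy inequality, and the jump integral by splitting $dN=d\widetilde N+\lambda\,ds$, bounding the compensated martingale via a Kunita/BDG-type inequality and the finite-variation part via H\"older.

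The heart of the argument is the integrand estimates, and the standard device is to insert $\pm$ the coefficient evaluated at $x_\Delta$ (resp.\ $\bar x_\Delta$) so that each coefficient difference splits into a \emph{freezing error} in $\bar x_\Delta-x_\Delta$ and a \emph{propagation error} in $x_\Delta-x$. For the drift, Lemma \ref{eq:l2} gives $|f(\bar x_\Delta)-f(x)|\le K_n(|\bar x_\Delta-x_\Delta|+|x_\Delta-x|)$. For the diffusion, combining boundedness of $\varphi$ (Assumption \ref{eq:a1}), the bound $g\le\mu(n)$, the Lipschitz property of $g$ (Lemma \ref{eq:l2}) and the Lipschitz property of $\varphi$ (Assumption \ref{eq:a5}), the difference $|\varphi(\bar x_\Delta((\cdot-\tau)))g(\bar x_\Delta)-\varphi(x((\cdot-\tau)))g(x)|$ decomposes into current-time errors plus a delayed error $|\bar x_\Delta((\cdot-\tau))-x((\cdot-\tau))|$. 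The freezing errors are controlled in $L^p$ by Lemma \ref{eq:l5}, which, using $\Delta^{1/4}\pi(\Delta)\le 1$, contributes $O(\Delta^{p/4})$ from the drift/Brownian part and $O(\Delta)$ from the jump part; the moment bound of Lemma \ref{eq:l6} keeps every constant independent of $\Delta$.

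The delayed error requires separate treatment on $[0,\tau]$ and on $[\tau,T]$. For $s\in[0,\tau]$ the shifted argument lies in $[-\tau,0]$, where both processes equal the initial data up to grid rounding, so Assumption \ref{eq:a3} bounds it by $K_3\Delta^{\gamma}$ and contributes $O(\Delta^{p\gamma})$; for $s>\tau$ the delayed error is split once more into freezing plus propagation and absorbed into the running supremum. Collecting all contributions and setting $G(t)=\mathbb{E}\sup_{0\le u\le t}|e_\Delta(u\wedge\varsigma_{\Delta,n})|^p$, I expect to reach an inequality of the form $G(t)\le C(\Delta^{p/4}+\Delta^{p\gamma}+\Delta)+C\int_0^t G(s)\,ds$, whereupon the Gronwall inequality yields \eqref{eq:37} with exponent $\min(p/4,p\gamma,1)=p(1/4\wedge\gamma\wedge 1/p)$, and \eqref{eq:38} follows at once by letting $\Delta\to 0$.

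The main obstacle I anticipate is the non-Markovian delay coupling inside the diffusion coefficient: one must keep the delayed argument $\bar x_\Delta((\cdot-\tau))$ within the Lipschitz region and track its error at the correct power through both the initial segment, where only H\"older continuity of $\xi$ (Assumption \ref{eq:a3}) is available, and the post-initial segment, while simultaneously ensuring the Poisson contribution is captured with the sharper $\Delta^{1/p}$ rate coming from the $\Delta$ term of Lemma \ref{eq:l5} rather than the diffusive $\Delta^{1/2}$ rate. Verifying that the separate rates $p/4$, $p\gamma$ and $1$ emerge cleanly, and that all prefactors stay $\Delta$-independent via Lemma \ref{eq:l6}, is the delicate bookkeeping that makes the final Gronwall step valid.
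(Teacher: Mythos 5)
Your proposal follows essentially the same route as the paper's proof: the same three-way decomposition of the error into drift, diffusion and jump integrals estimated by H\"older, Burkholder--Davis--Gundy, and the compensated-plus-finite-variation splitting respectively; the same insertion of intermediate terms to separate the freezing error $\bar x_\Delta-x_\Delta$ (controlled by Lemma \ref{eq:l5} and the moment bound of Lemma \ref{eq:l6}) from the propagation error handled by Gronwall; and the same treatment of the delayed argument on $[-\tau,0]$ via Assumption \ref{eq:a3}, yielding the rate $\Delta^{p(1/4\wedge\gamma\wedge 1/p)}$. Your explicit remark that $f_\Delta,g_\Delta$ coincide with $f,g$ on $[1/n,n]$ for small $\Delta$ is a slight tightening of a step the paper leaves implicit, but the argument is otherwise identical.
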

\begin{proof}
By elementary inequality, it follows from \eqref{eq:3} and \eqref{eq:30} that for $t_1\in [0,T]$
\begin{align*}
\mathbb{E}\Big( \sup_{0\leq t \leq t_1}|x_{\Delta}(t \wedge \varsigma_{\Delta,n})-x(t \wedge \varsigma_{\Delta,n})|^p\Big)&\leq H_{31}+H_{32}+H_{33}.
\end{align*}
where
\begin{align*}
&H_{31}=3^{p-1}\mathbb{E}\Big( |\int_{0}^{t_1\wedge \varsigma_{\Delta,n}}[f_{\Delta}(\bar{x}_{\Delta}(s^-))-f(x(s^-))]ds|^p\Big),\\
&H_{32}=3^{p-1} \mathbb{E}\Big( \sup_{0\leq t \leq t_1}|\int_{0}^{t\wedge \varsigma_{\Delta,n}}[\varphi(\bar{x}_{\Delta}((s-\tau)^-))g_{\Delta}(\bar{x}_{\Delta}(s^-))-\varphi(x((s-\tau)^-))g(x(s^-))]dB(s)|^p\Big)\\
\text{and}& &\\
&H_{33}= 3^{p-1}\mathbb{E}\Big( \sup_{0\leq t \leq t_1}|\int_{0}^{t\wedge \varsigma_{\Delta,n}}[h(\bar{x}_{\Delta}(s^-))-h(x(s^-))]dN(s)|^p\Big).
\end{align*}
By the H\"older inequality and Lemma \ref{eq:l2}, we have 
\begin{equation}\label{eq:39}
H_{31}\leq 3^{p-1}T^{p-1}K_n^p \mathbb{E}\int_{0}^{t_1\wedge \varsigma_{\Delta,n}}|\bar{x}_{\Delta}(s^-)-x(s^-)|^pds,
\end{equation}
Furthermore, the H\"older and Burkholder-Davis Gundy inequalities yield
\begin{align*}
H_{32}&\leq 3^{p-1}T^{\frac{p-2}{2}}c_p \mathbb{E}\int_{0}^{t_1\wedge \varsigma_{\Delta,n}}\Big(|\varphi(\bar{x}_{\Delta}((s-\tau)^-))g_{\Delta}(\bar{x}_{\Delta}(s^-))-\varphi(x((s-\tau)^-))g_{\Delta}(\bar{x}_{\Delta}(s^-))\\
&+\varphi(x((s-\tau)^-))g_{\Delta}(\bar{x}_{\Delta}(s^-))-\varphi(x((s-\tau)^-))g(x(s^-))|^p\Big)ds\\
&\leq 2^{p-1}3^{p-1}T^{\frac{p-2}{2}}c_p \mathbb{E}\int_{0}^{t_1\wedge  \varsigma_{\Delta,n}}\Big(g_{\Delta}(\bar{x}_{\Delta}(s^-))^p|\varphi(\bar{x}_{\Delta}((s-\tau)^-))-\varphi(x((s-\tau)^-))|^p\\
&+\varphi(x((s-\tau)^-))^p|g_{\Delta}(\bar{x}_{\Delta}(s^-))-g(x(s^-))|^p \Big)ds,
\end{align*}
where $c_p$ is a positive constant. For $s\in[0,t_1\wedge  \varsigma_{\Delta,n}]$, we have $x_{\Delta}(s^-),\bar{x}_{\Delta}(s^-)\in [1/n,n] $. So by Assumption \ref{eq:a3}, Lemma \ref{eq:l2} and \eqref{eq:23}, we now have
\begin{align}\label{eq:40}
H_{32}&\leq 2^{p-1}3^{p-1}T^{\frac{p-2}{2}}c_p L_n^p(\mu(n))^p\mathbb{E}\int_{-\tau}^{0}|\xi([s/\Delta]\Delta)-\xi(s)|^pds\\  \nonumber
&+2^{p-1}3^{p-1}T^{\frac{p-2}{2}}c_p(L_n^p(\mu(n))^p +K_n^p\sigma^p)\mathbb{E}\int_{0}^{t_1\wedge  \varsigma_{\Delta,n}}|\bar{x}_{\Delta}(s^-)-x(s^-)|^pds. \\ \nonumber
&\leq 2^{p-1}3^{p-1}T^{\frac{p-2}{2}}c_pL_n^p K_3^p(\mu(n))^p\Delta^{p\gamma}\tau +2^{p-1}3^{p-1}T^{\frac{p-2}{2}}c_p\Big(L_n^p(\mu(n))^p +K_n^p\sigma^p\Big)\\
&\times \mathbb{E}\int_{0}^{t_1\wedge  \varsigma_{\Delta,n}}|\bar{x}_{\Delta}(s^-)-x(s^-)|^pds.
\end{align}
Moreover, we obtain from elementary inequality
\begin{align*}
H_{33}&\leq  3^{p-1}\mathbb{E}\Big( \sup_{0\leq t \leq t_1}|\int_{0}^{t\wedge \varsigma_{\Delta,n}}[h(\bar{x}_{\Delta}(s^-))-h(x(s^-))]d\widetilde{N}(s)\\
&+ \lambda\int_{0}^{t \wedge \varsigma_{\Delta,n}}[h(\bar{x}_{\Delta}(s^-))-h(x(s^-))]ds|^p\Big)\\
&\leq H_{331}+ H_{332},
\end{align*}
where
\begin{align*}
H_{331}&=2^{p-1} 3^{p-1}\mathbb{E}\Big( \sup_{0\leq t \leq t_1}|\int_{0}^{t\wedge \varsigma_{\Delta,n}}[h(\bar{x}_{\Delta}(s^-))-h(x(s^-))]d\widetilde{N}(s)|^p\Big)
\end{align*}
and
\begin{align*}
H_{332}&=2^{p-1} 3^{p-1}\lambda^p\mathbb{E}\Big( \sup_{0\leq t \leq t_1}|\int_{0}^{t\wedge \varsigma_{\Delta,n}}[h(\bar{x}_{\Delta}(s^-))-h(x(s^-))]ds|^p\Big).
\end{align*}
The Doob martingale inequality, martingale isometry and Lemma \ref{eq:l2} give us
\begin{align*}
H_{331}&\le 2^{p-1} 3^{p-1}\bar{c}_p\lambda^{\frac{p}{2}} \Big(\mathbb{E}\int_{0}^{t_1\wedge \varsigma_{\Delta,n}} |h(\bar{x}_{\Delta}(s^-))-h(x(s^-))|^2d\widetilde{N}(s) \Big)^{\frac{p}{2}}\\
&\le 2^{p-1} 3^{p-1}\bar{c}_p \lambda^{\frac{p}{2}}T^{\frac{p-2}{2}}K_n^p\mathbb{E}\int_{0}^{t_1\wedge \varsigma_{\Delta,n}}|\bar{x}_{\Delta}(s^-)-x(s^-)|^pds,
\end{align*}
where $\bar{c}_p$ is a positive constant. Moreover by the H\"older inequality and Lemma \ref{eq:l2},
\begin{align*}
H_{332}&\le 2^{p-1} 3^{p-1}\lambda^pT^{p-1}\mathbb{E}\int_{0}^{t_1\wedge\varsigma_{\Delta,n}} |h(\bar{x}_{\Delta}(s^-))-h(x(s^-))|^pds\\
&\le 2^{p-1} 3^{p-1}\lambda^pT^{p-1}K_n^p\mathbb{E}\int_{0}^{t_1\wedge \varsigma_{\Delta,n}}|\bar{x}_{\Delta}(s^-)-x(s^-)|^pds,
\end{align*}
where Lemma \ref{eq:l2} has been used. Substituting $H_{331}$ and $H_{332}$ into $H_{33}$ yields
\begin{equation}\label{eq:41}
H_{33}\leq  2^{p-1} 3^{p-1}K_n^p(\bar{c}_p \lambda^{\frac{p}{2}}T^{\frac{p-2}{2}}+\lambda^pT^{p-1})\mathbb{E}\int_{0}^{t_1\wedge \varsigma_{\Delta,n}}|\bar{x}_{\Delta}(s^-)-x(s^-)|^pds.
\end{equation}
We now combine \eqref{eq:39}, \eqref{eq:40} and \eqref{eq:41} to have
\begin{align*}
\mathbb{E}\Big( \sup_{0\leq t \leq t_1}|x_{\Delta}(t \wedge \varsigma_{\Delta,n})-x(t \wedge \varsigma_{\Delta,n})|^p\Big)&\leq \zeta_1\Delta^{p\gamma}\tau+(\zeta_2+\zeta_3+\zeta_4)\mathbb{E}\int_{0}^{t_1\wedge \varsigma_{\Delta,n}}|\bar{x}_{\Delta}(s^-)-x(s^-)|^pds\\
&\leq \zeta_1\Delta^{p\gamma}\tau+(\zeta_2+\zeta_3+\zeta_4)\mathbb{E}\int_{0}^{t_1\wedge \varsigma_{\Delta,n}}|\bar{x}_{\Delta}(s)-x(s)|^pds,
\end{align*}
where
\begin{align*}
\zeta_1&=2^{p-1}3^{p-1}T^{\frac{p-2}{2}}c_pL_n^pK_3^p(\mu(n))^p\\
\zeta_2&=3^{p-1}T^{p-1}K_n^p\\
\zeta_3&=2^{p-1}3^{p-1}T^{\frac{p-2}{2}}c_p(L_n^p(\mu(n))^p +K_n^p\sigma^p)
\end{align*}
and 
\begin{align*}
\zeta_4&=2^{p-1}3^{p-1}K_n^p(\bar{c}_p \lambda^{\frac{p}{2}}T^{\frac{p-2}{2}}+\lambda^pT^{p-1}).
\end{align*}
Meanwhile by elementary inequality and Lemma \ref{eq:l5},
\begin{align*}
&\mathbb{E}\Big( \sup_{0\leq t \leq t_1}|x_{\Delta}(t \wedge \varsigma_{\Delta,n})-x(t \wedge \varsigma_{\Delta,n})|^p\Big)\\
&\leq \zeta_1\Delta^{p\gamma}\tau+2^{p-1}(\zeta_2+\zeta_3+\zeta_4)\int_{0}^{T}\mathbb{E}\Big(\mathbb{E}|\bar{x}_{\Delta}(s)-x_{\Delta}(s)|^p\big|\mathcal{F}_{k(s)}\Big)ds\\
&+2^{p-1}(\zeta_2+\zeta_3+\zeta_4)\int_{0}^{t_1}\mathbb{E}\Big(\sup_{0\leq t\leq s}|x_{\Delta}(t\wedge \varsigma_{\Delta,n})-x(t\wedge \varsigma_{\Delta,n})|^p\Big)ds\\
&\le \zeta_1\Delta^{p\gamma}\tau+2^{p-1}(\zeta_2+\zeta_3+\zeta_4)\mathfrak{D}_1\Big(\Delta^{p/2}(\pi(\Delta))^p+\Delta\Big)\int_{0}^{T}\mathbb{E}|\bar{x}_{\Delta}(s)|^pds\\
&+2^{p-1}(\zeta_2+\zeta_3+\zeta_4)\int_{0}^{t_1}\mathbb{E}\Big(\sup_{0\leq t\leq s}|x_{\Delta}(t\wedge \varsigma_{\Delta,n})-x(t\wedge \varsigma_{\Delta,n})|^p\Big)ds
\end{align*}
So by Lemma \ref{eq:l6}, we have 
\begin{align*}
&\mathbb{E}\Big( \sup_{0\leq t \le t_1}|x_{\Delta}(t \wedge \varsigma_{\Delta,n})-x(t \wedge \varsigma_{\Delta,n})|^p\Big)\\
&\le \zeta_1\tau\Delta^{p\gamma}+2^{p-1}\rho_3 \mathfrak{D}_1T(\zeta_2+\zeta_3+\zeta_4)\Big([\Delta^{p/4}(\pi(\Delta))^p]\Delta^{p/4}+\Delta^{p(1/p)}\Big)\\
&+2^{p-1}(\zeta_2+\zeta_3+\zeta_4)\int_{0}^{t_1}\mathbb{E}\Big(\sup_{0\leq t\leq s}|x_{\Delta}(t \wedge \varsigma_{\Delta,n})-x(t \wedge \varsigma_{\Delta,n})|^p\Big)ds\\
&\le \Big(\zeta_1\tau+2^{p-1}\rho_3 \mathfrak{D}_1T(\zeta_2+\zeta_3+\zeta_4)(\Delta^{p/4}(\pi(\Delta))^p+1)\Big)\Delta^{p(1/4  \wedge \gamma \wedge 1/p)}\\
&+2^{p-1}(\zeta_2+\zeta_3+\zeta_4)\int_{0}^{t_1}\mathbb{E}\Big(\sup_{0\leq t\leq s}|x_{\Delta}(t \wedge \varsigma_{\Delta,n})-x(t \wedge \varsigma_{\Delta,n})|^p\Big)ds.
\end{align*}
Noting from \eqref{eq:24} that $[\Delta^{1/4}(\pi(\Delta))]^{p}\le 1$, we have 
\begin{align*}
&\mathbb{E}\Big( \sup_{0\leq t \le t_1}|x_{\Delta}(t \wedge \varsigma_{\Delta,n})-x(t \wedge \varsigma_{\Delta,n})|^p\Big)\\
&\le \Big(\zeta_1\tau+2^{p}\rho_3 \mathfrak{D}_1T(\zeta_2+\zeta_3+\zeta_4)\Big)\Delta^{p(1/4  \wedge \gamma \wedge 1/p)}\\
&+2^{p-1}(\zeta_2+\zeta_3+\zeta_4)\int_{0}^{t_1}\mathbb{E}\Big(\sup_{0\leq t\leq s}|x_{\Delta}(t \wedge \varsigma_{\Delta,n})-x(t \wedge \varsigma_{\Delta,n})|^p\Big)ds.
\end{align*}
By the Gronwall inequality, we obtain
\begin{equation*}
  \mathbb{E}\Big( \sup_{0\leq t \leq T}|x_{\Delta}(t \wedge \varsigma_{\Delta,n})-x(t \wedge \varsigma_{\Delta,n})|^p\Big)\le\mathcal{K} \Delta^{p(1/4\wedge \gamma \wedge 1/p)},
\end{equation*}
where $\mathcal{K}=\varrho_1(p) e^{\varrho_2(p)}$ with 
\begin{align*}
\varrho_1(p)&=\zeta_1\tau+2^{p}\rho_3 \mathfrak{D}_1T(\zeta_2+\zeta_3+\zeta_4)
\end{align*}
and
\begin{align*} 
\varrho_2(p)&=2^{p-1}(\zeta_2+\zeta_3+\zeta_4).
\end{align*}
Moreover, the required inequality \eqref{eq:38} is deduced by setting $\Delta\rightarrow 0$.
\end{proof}
The following gives the strong convergence theory of the truncated EM scheme.
\begin{theorem}\label{eq:thrm2}
Let Assumptions \ref{eq:a1}, \ref{eq:a2}, \ref{eq:a3} and \ref{eq:a5} hold. Then for any $p\ge 2$, we have
\begin{equation}\label{eq:42}
\lim_{\Delta\rightarrow 0}\mathbb{E}\Big( \sup_{0\leq t \leq T}|x_{\Delta}(t)-x(t)|^p \Big)=0
\end{equation}
and consequently
\begin{equation}\label{eq:43}
\lim_{\Delta\rightarrow 0}\mathbb{E}\Big( \sup_{0\leq t \leq T}|\bar{x}_{\Delta}(t)-x(t)|^p \Big)=0.
\end{equation}
\end{theorem}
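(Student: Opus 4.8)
The plan is to localise the error with the stopping time $\varsigma_{\Delta,n}=\vartheta_{\Delta,n}\wedge\tau_n$ already introduced, to read off the localised rate from Lemma \ref{eq:l8}, and to kill the contribution of the ``bad'' event $\{\varsigma_{\Delta,n}\le T\}$ by a Young-inequality argument fed by uniform high moments and by the probability bound of Lemma \ref{eq:l7}. Fix $p\ge 2$ and $T>0$, write $e_\Delta=\sup_{0\le t\le T}|x_\Delta(t)-x(t)|^p$, and split $\mathbb{E}e_\Delta=\mathbb{E}(e_\Delta 1_{\{\varsigma_{\Delta,n}>T\}})+\mathbb{E}(e_\Delta 1_{\{\varsigma_{\Delta,n}\le T\}})$. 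On $\{\varsigma_{\Delta,n}>T\}$ both $x$ and $x_\Delta$ remain in $(1/n,n)$ throughout $[0,T]$, so the stopped and unstopped paths agree there and the first term is at most $\mathbb{E}\big(\sup_{0\le t\le T}|x_\Delta(t\wedge\varsigma_{\Delta,n})-x(t\wedge\varsigma_{\Delta,n})|^p\big)\le\mathcal K\Delta^{p(1/4\wedge\gamma\wedge 1/p)}$ by Lemma \ref{eq:l8}.

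For the second term I would pick $\bar{p}>p$ (say $\bar{p}=p+1$) and apply Young's inequality in the form $ab\le\frac{p\eta}{\bar{p}}a^{\bar{p}/p}+\frac{\bar{p}-p}{\bar{p}}\eta^{-p/(\bar{p}-p)}b^{\bar{p}/(\bar{p}-p)}$ with $a=e_\Delta$ and $b=1_{\{\varsigma_{\Delta,n}\le T\}}$, so that $\mathbb{E}(e_\Delta 1_{\{\varsigma_{\Delta,n}\le T\}})\le\frac{p\eta}{\bar{p}}\,\mathbb{E}\sup_{0\le t\le T}|x_\Delta(t)-x(t)|^{\bar{p}}+C_\eta\,\mathbb{P}(\varsigma_{\Delta,n}\le T)$, where $C_\eta=\frac{\bar{p}-p}{\bar{p}}\eta^{-p/(\bar{p}-p)}$ depends only on $\eta,p,\bar{p}$. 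The prefactor $\mathbb{E}\sup_t|x_\Delta-x|^{\bar{p}}\le 2^{\bar{p}-1}\big(\mathbb{E}\sup_t|x_\Delta|^{\bar{p}}+\mathbb{E}\sup_t|x|^{\bar{p}}\big)$ has to be bounded uniformly in $\Delta$; this is a running-maximum strengthening of Lemmas \ref{eq:l1} and \ref{eq:l6}, which I would obtain by the same It\^o-plus-Burkholder-Davis-Gundy estimates used to prove those lemmas.

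Next I would control the probability via $\mathbb{P}(\varsigma_{\Delta,n}\le T)\le\mathbb{P}(\vartheta_{\Delta,n}\le T)+\mathbb{P}(\tau_n\le T)$: the first summand is handled by Lemma \ref{eq:l7}, and $\mathbb{P}(\tau_n\le T)\to 0$ as $n\to\infty$ because Theorem \ref{eq:thrm1} gives $\tau_\infty=\infty$ a.s. The $\epsilon$-argument is then run in the correct order. Given $\epsilon\in(0,1)$, I first fix $\bar{p}$ and choose $\eta$ so small that $\frac{p\eta}{\bar{p}}\,\mathbb{E}\sup_t|x_\Delta-x|^{\bar{p}}\le\epsilon/3$; with $\eta$ (hence $C_\eta$) now fixed, I choose $n$ large and then, by Lemma \ref{eq:l7}, $\bar\Delta$ small so that $C_\eta\,\mathbb{P}(\varsigma_{\Delta,n}\le T)\le\epsilon/3$ for all $\Delta\le\bar\Delta$; finally I shrink $\Delta$ further so that $\mathcal K\Delta^{p(1/4\wedge\gamma\wedge 1/p)}\le\epsilon/3$. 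Adding the three pieces gives $\mathbb{E}e_\Delta\le\epsilon$, and since $\epsilon$ is arbitrary this yields \eqref{eq:42}.

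The consequence \eqref{eq:43} I would deduce from $\sup_t|\bar{x}_\Delta(t)-x(t)|^p\le 2^{p-1}\big(\sup_t|\bar{x}_\Delta(t)-x_\Delta(t)|^p+\sup_t|x_\Delta(t)-x(t)|^p\big)$: the second term vanishes by \eqref{eq:42}, and the first is estimated through Lemma \ref{eq:l5} using $\Delta^{1/4}\pi(\Delta)\le 1$ from \eqref{eq:24}. I expect the main obstacle to be twofold: securing the uniform-in-$\Delta$ $\bar{p}$-th running-maximum moment bounds that feed the Young step, and orchestrating the order of the choices $\bar{p},\eta,n,\Delta$ so that the three error contributions are simultaneously below $\epsilon/3$ (the Young parameter $\eta$ must be frozen before $n$, and $n$ before $\Delta$). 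The subtlest point in \eqref{eq:43} is the passage from the pointwise $L^p$ closeness of $\bar{x}_\Delta$ and $x_\Delta$ in Lemma \ref{eq:l5} to a supremum-in-time statement, since the piecewise-constant process lags the jumps of $x_\Delta$ within each mesh interval.
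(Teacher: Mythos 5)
Your proposal follows essentially the same route as the paper: the same localisation by $\varsigma_{\Delta,n}=\vartheta_{\Delta,n}\wedge\tau_n$, the same Young-inequality splitting of the bad event (the paper uses the special case $\bar p=2p$, i.e.\ $\mathfrak{a}\mathfrak{b}\le\tfrac{\delta}{2}\mathfrak{a}^2+\tfrac{1}{2\delta}\mathfrak{b}^2$), the same appeal to Lemmas \ref{eq:l7} and \ref{eq:l8} for the probability of the bad event and the localised rate, and the same ordering of the choices of the Young parameter, $n$, and $\Delta$. The one place you are more demanding than the paper is in explicitly requiring a running-maximum moment bound $\mathbb{E}\sup_{0\le t\le T}|x_\Delta(t)|^{\bar p}$ for the Young step, whereas the paper simply invokes Lemmas \ref{eq:l1} and \ref{eq:l6} (which control only $\sup_t\mathbb{E}|\cdot|^p$) at the corresponding point.
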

\begin{proof}
We only need to prove the theorem for $ p\ge 3$ as for $p\in [2,3)$ it follows from the case of $p=3$ and the H\"older inequality. Let $\vartheta_{\Delta,n}$, $\tau_n$ and $\varsigma_{\Delta,n}$ be the same as before and set
\begin{equation*}
e_{\Delta}(t)=x_{\Delta}(t)-x(t).
\end{equation*}
For any arbitrarily $ \delta >0$, the Young inequality 
\begin{equation*}
\mathfrak{a}\mathfrak{b}\le \frac{\delta}{2}\mathfrak{a}^2+\frac{1}{2\delta}\mathfrak{b}^2
\quad \forall \mathfrak{a},\mathfrak{b} > 0,
\end{equation*}
yields
\begin{align}\label{eq:44}
\mathbb{E}\Big(\sup_{0\leq t \leq T}|e_{\Delta}(t)|^p\Big)&=\mathbb{E}\Big(\sup_{0\leq t \leq T}|e_{\Delta}(t)|^p1_{\{\tau_n>T \text{ and }\vartheta_{\Delta,n}>T\}}\Big)+\mathbb{E}\Big(\sup_{0\leq t \leq T}|e_{\Delta}(t)|^p1_{\{\tau_n\le T \text{ or }\vartheta_{\Delta,n}\le T\}}\Big)\nonumber\\
&\leq \mathbb{E}\Big(\sup_{0\leq t \leq T}|e_{\Delta}(t)|^p1_{\{\varsigma_{\Delta,n}>T\}}\Big)+\frac{\delta}{2}\mathbb{E}\Big(\sup_{0\leq t \leq T}|e_{\Delta}(t)|^{2p}\Big)\nonumber\\
&+\frac{1}{2\delta}\mathbb{P}(\tau_n\leq T \text{ or }\vartheta_{\Delta,n} \leq T).
\end{align}
So for $ p\ge 3$, Lemmas \ref{eq:l1} and \ref{eq:l6} give us
\begin{align}\label{eq:45}
 \mathbb{E}\Big(\sup_{0\leq t \leq T} |e_{\Delta}(t)|^{2p}\Big)&\leq 2^{2p}\mathbb{E}\Big(\sup_{0\leq t \leq T}|x(t)|^{2p}\vee \sup_{0\leq t \leq T}|x_{\Delta}(t)|^{2p}\Big)
 \nonumber\\
&\le 2^{2p}(\rho_1\vee \rho_3)^2.
\end{align}
Also by Theorem \ref{eq:thrm1} and Lemma \ref{eq:l8},
\begin{equation}\label{eq:46}
\mathbb{P}(\varsigma_{\Delta,n}\leq T)\leq \mathbb{P}(\tau_n\leq T) +\mathbb{P}(\vartheta_{\Delta,n}\leq T).
\end{equation}
Moreover, by Lemma \ref{eq:l8}
\begin{equation}\label{eq:47}
\mathbb{E}\Big(\sup_{0\leq t \leq T}|e_{\Delta}(t)|^p1_{\{\varsigma_{\Delta,n}>T\}}\Big)\leq \mathcal{K}\Delta^{p(1/4\wedge \gamma \wedge 1/p)}.
\end{equation}
Therefore, we substitute \eqref{eq:45}, \eqref{eq:46} and \eqref{eq:47} into \eqref{eq:44} to have
\begin{align*}
\mathbb{E}\Big(\sup_{0\leq t \leq T}|e_{\Delta}(t)|^p\Big)&\leq \frac{2^{2p}(\rho_1\vee \rho_3)^2\delta}{2}+\mathcal{K}\Delta^{p(1/4\wedge \gamma \wedge 1/p)}+\frac{1}{2\delta}\mathbb{P}(\tau_n\leq T)+\frac{1}{2\delta}\mathbb{P}(\vartheta_{\Delta,n}\leq T).
\end{align*}
Given $\epsilon\in (0,1)$, we can select $\delta$ so that
\begin{equation}\label{eq:48}
\frac{2^{2p}(\rho_1\vee \rho_3)^2\delta}{2} \leq\frac{\epsilon}{4}.
\end{equation}
Similarly, for any given $\epsilon\in (0,1)$, there exists $n_o$ so that for $n\geq n_o$, we may select $\delta$ to have
\begin{equation}\label{eq:49}
\frac{1}{2\delta}\mathbb{P}(\tau_n\leq T)\leq \frac{\epsilon}{4}
\end{equation}
and select $n(\epsilon)\leq n_o$ such that for $\Delta\in (0,\bar{\Delta}]$
\begin{equation}\label{eq:50}
\frac{1}{2\delta}\mathbb{P}(\vartheta_{\Delta,n}\leq T)\leq \frac{\epsilon}{4}.
\end{equation}
Finally, we may select $\Delta\in (0,\bar{\Delta}]$ sufficiently small for $\epsilon\in (0,1)$ such that
\begin{equation}\label{eq:51}
\mathcal{K}\Delta^{p(1/4\wedge \gamma \wedge 1/p)}\le \frac{\epsilon}{4}.
\end{equation}
Combining \eqref{eq:48}, \eqref{eq:49}, \eqref{eq:50} and \eqref{eq:51}, we establish
\begin{align*}
\mathbb{E}\Big(\sup_{0\leq t \leq T}|x_{\Delta}(t)-x(t)|^p\Big)\leq \epsilon.
\end{align*}
Therefore, we obtain \eqref{eq:42} and clearly, by Lemma \ref{eq:l5}, also get \eqref{eq:43} by letting $\Delta\rightarrow 0$.
\end{proof}
\section{Numerical examples}
In this section, we analyse the strong convergence result established in Theorem \ref{eq:thrm2} by comparing the truncated Euler-Maruyama (TEM) scheme with backward Euler-Maruyama (BEM) scheme for SDDE \eqref{eq:4} without $\alpha_{-1}x(t)^{-1}$ term. It is already noted in \cite{emma} that the BEM scheme is not known to cope with $\alpha_{-1}x(t)^{-1}$ term. Consider the following form of SDDE \eqref{eq:3}
\begin{equation}\label{eq:sm1}
 dx(t)=(\alpha_{-1}x(t^-)^{-1}-\alpha_{0}+\alpha_{1}x(t^-)-\alpha_{2}x(t^-)^{2})dt+\varphi(x((t-1)^-))x(t^-)^{5/4}dB(t)+\alpha_{3}x(t^{-})dN(t),
\end{equation}
with initial data $\xi(t)=0.2$. Here $\varphi(\cdot)$ is a sigmoid-type function defined by
\begin{equation}\label{eq:sm2}
\varphi(y)=
\begin{cases}
  \frac{1}{2}\frac{(1+(e^{y}-e^{-y}))}{(e^{y}+e^{-y})}, & \mbox{if $y\geq 0$ }\\
  \frac{1}{4},                             & \mbox{Otherwise},
\end{cases}
\end{equation}
Obviously, equation \eqref{eq:sm2} meets all the conditions imposed on $\varphi(\cdot)$ (see \cite{emma}). The coefficient terms $f(x)=\alpha_{-1}x^{-1}-\alpha_{0}+\alpha_{1}x-\alpha_{2}x^{2}$ and $g(x)=x^{5/4}$ are locally Lipschitz continuous and hence fulfil Assumption \ref{eq:l1}. Moreover, we easily observe
\begin{equation*}
\sup_{1/u \leq x\leq u}(|f(x)|\vee g(x))\leq K_9u^2,\quad u\ge 1,
\end{equation*}
where $K_9=\alpha_{-1}+\alpha_{0}+\alpha_{1}+ \alpha_{2}+\alpha_{3}$. We can now use $\mu=K_9u^2$ with inverse $\mu^{-1}(u)=(u/K_9)^{1/2}$.
\subsection{Numerical results}
By choosing $\pi(\Delta)=\Delta^{-2/3}$, step size $10^{-3}$ and the following coefficient values in Table \ref{Tab:table1},  we obtain Monte Carlo simulated sample path of $x(t)$ to \textup{SDDE} \eqref{eq:sm1} in Figure \ref{Fig:figure1} using the TEM scheme.
\begin{table}[!htbp]
 \centering
\begin{tabular}{ |c|c|c|c|c|}
  \hline
  $\alpha_{-1}$ & $\alpha_{0}$ & $\alpha_{1}$ & $\alpha_{2}$&$\alpha_{3}$\\
  \hline 
  0.2  & 0.3  & 0.2 & 0.5 & 1 \\
  \hline
\end{tabular}
\caption{Coefficient values including $\alpha_{-1}$}
\label{Tab:table1}
\end{table}
\par
By similarly choosing $\pi(\Delta)=\Delta^{-2/3}$, step size $10^{-3}$ and the coefficient values in Table \ref{Tab:table2} below, we also obtain Monte Carlo simulated sample paths of $x(t)$ to \textup{SDDE} \eqref{eq:sm1} in Figure \ref{Fig:figure2} using the TEM and BEM schemes. We can clearly see the TEM scheme converges strongly to BEM scheme. 
\begin{table}[!htbp]
 \centering
\begin{tabular}{ |c|c|c|c|}
  \hline
  $\alpha_{0}$ & $\alpha_{1}$ & $\alpha_{2}$&$\alpha_{3}$\\
  \hline 
  0.3  & 0.2 & 0.5 & 1 \\
  \hline
\end{tabular}
\caption{Coefficient values excluding $\alpha_{-1}$}
\label{Tab:table2}
\end{table}
\par
Finally, the plot of strong errors between these two numerical schemes is displayed in Figure \ref{Fig:figure3} with reference line at 0. Do note this simulated result of strong errors is not yet established theoretically.
\begin{figure}[!htbp]
  \centerline{\includegraphics[scale=1]{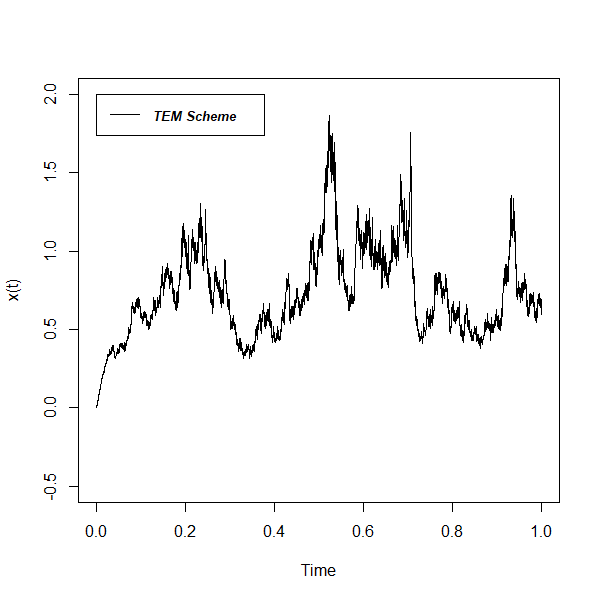}}
  \caption{Simulated sample path of $x(t)$ using $\Delta=0.001$}
  \label{Fig:figure1}
\end{figure}
\begin{figure}[!htbp]
  \centerline{\includegraphics[scale=1]{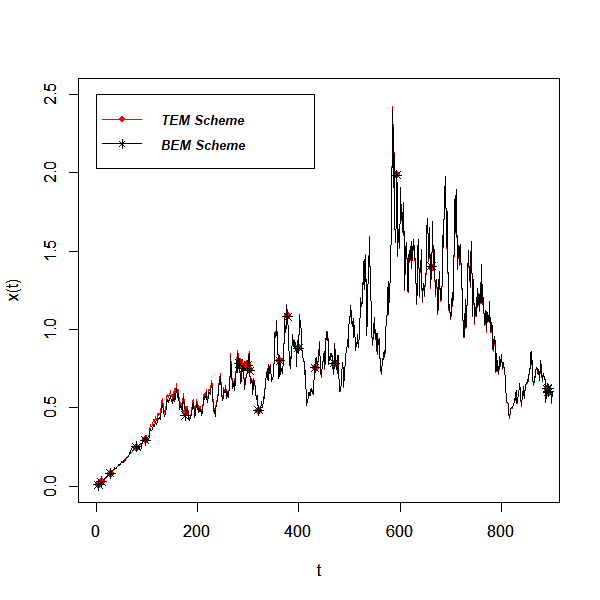}}
  \caption{Simulated sample paths of $x(t)$ using $\Delta=0.0001$}
  \label{Fig:figure2}
\end{figure}
\begin{figure}[!htbp]
  \centerline{\includegraphics[scale=1]{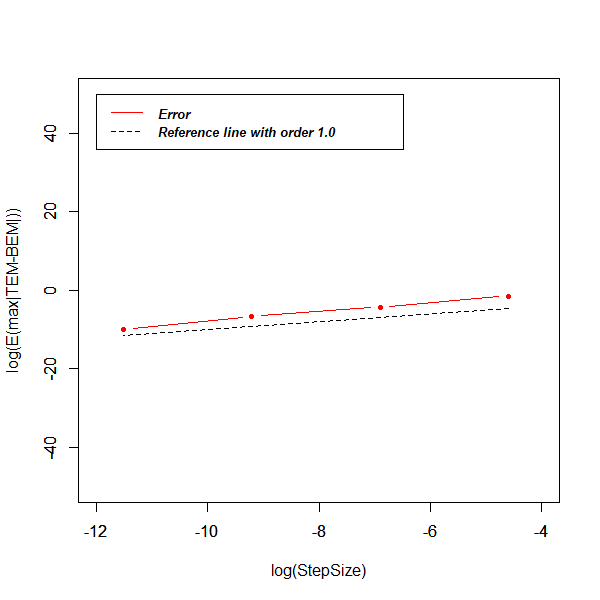}}
  \caption{Strong errors between the TEM and BEM schemes}
  \label{Fig:figure3}
\end{figure}
\section{Financial applications}
In this section, we apply Theorem \ref{eq:thrm2} to justify numerical schemes within Monte Carlo framework that value the expected payoffs of a bond and a barrier option.
\subsection{A bond}
\noindent Suppose the short-term interest rate is explained by \eqref{eq:3}. Then a bond price $P_{\mathbf{B}}$ at maturity time $T$ is of the form 
\begin{equation}\label{eq:52}
  P_{\mathbf{B}}(T)=\mathbb{E}\Big[ \exp \Big(-\int_{0}^{T}x(t)dt \Big)\Big].
\end{equation}
The approximate value of \eqref{eq:52} using \eqref{eq:28} is computed by
\begin{equation*}
  P_{\mathbf{B}\Delta}(T)=\mathbb{E}\Big[ \exp \Big(-\int_{0}^{T}\bar{x}_{\Delta}(t)dt \Big)\Big].
\end{equation*}
Hence by Theorem \ref{eq:thrm2}
\begin{equation*}
  \lim_{\Delta\rightarrow 0}|P_{\mathbf{B}\Delta}(T)-P_{\mathbf{B}}(T)|=0.
\end{equation*}
\subsection{\textbf{A barrier option}}
Consider the payoff of a path-dependent barrier option at an expiry date $T$ given by
\begin{equation}\label{eq:53}
  P_{O}(T)=\mathbb{E}\Big[ (x(T)-\mathfrak{E})^+1_{\sup_{0\leq t\leq T }}x(t)<\mathfrak{B})\Big],
\end{equation}
where the barrier,  $\mathfrak{B}$, and exercise price, $\mathfrak{E}$, are constants. Then the approximate value of \eqref{eq:53} using \eqref{eq:28} is computed by
\begin{equation*}
  P_{O\Delta}(T)=\mathbb{E}\Big[ (\bar{x}_{\Delta}(T)-\mathfrak{E})^+1_{\sup_{0\leq t\leq T }}\bar{x}_{\Delta}(t)<\mathfrak{B})\Big].
\end{equation*}
Similarly, by Theorem \ref{eq:thrm2}
\begin{equation*}
  \lim_{\Delta\rightarrow 0}|P_{O\Delta}(T)-P_{O}(T)|=0.
\end{equation*}
See, for instance, \cite{highamao} for detailed accounts.
\section*{\textit{Acknowledgements}}
The Author would like to thank University of Strathclyde for the scholarship and his first PhD supervisor, Prof. Xuerong Mao.

\end{document}